\documentclass[12pt,letterpaper]{article}

\usepackage[left=1in,right=1in,top=1.5in,bottom=1.5in]{geometry}
\usepackage{setspace}

\usepackage{amsmath,amssymb,amsthm,mathtools,bm}

\usepackage{graphicx}
\usepackage{adjustbox}
\usepackage{booktabs}
\usepackage{threeparttable}
\usepackage{longtable}
\usepackage{array}
\usepackage{multirow}
\usepackage{caption}
\usepackage{subcaption}
\usepackage{float}
\usepackage{placeins}
\usepackage{makecell}

\usepackage{tabularx}

\newcolumntype{L}[1]{>{\raggedright\arraybackslash}p{#1}}
\newcolumntype{Y}{>{\raggedright\arraybackslash}X}

\usepackage{enumitem}
\usepackage{xcolor}
\usepackage{appendix}
\usepackage{pdflscape}
\usepackage{csquotes}
\usepackage{bbm}

\usepackage{chngcntr}
\counterwithin{figure}{section}
\counterwithin{table}{section}

\newcommand{\nostar}{\phantom{$^{***}$}}

\usepackage[authoryear]{natbib}

\usepackage[colorlinks=true,linkcolor=blue,citecolor=blue,urlcolor=blue]{hyperref}
\usepackage[nameinlink,capitalize]{cleveref}

\newtheorem{proposition}{Proposition}
\newtheorem{assumption}{Assumption}

\theoremstyle{definition}
\newtheorem{definition}{Definition}
\newtheorem{remark}{Remark}

\AtBeginDocument{
	\setlength{\abovedisplayskip}{7pt plus 2pt minus 2pt}
	\setlength{\belowdisplayskip}{7pt plus 2pt minus 2pt}
	\setlength{\abovedisplayshortskip}{4pt plus 2pt minus 1pt}
	\setlength{\belowdisplayshortskip}{4pt plus 2pt minus 1pt}
}

\title{Overshooting the Coordinate:\\Where Factor Corrections Land on Characteristic Axes}

\author{Useong Shin\thanks{
		Sogang Business School, Sogang University (Seoul, Korea).\\
		ORCID: \href{https://orcid.org/0009-0003-0197-9003}{0009-0003-0197-9003}\\
		Email: \texttt{useong@sogang.ac.kr}
}}
\date{\today}

\begin{document}
	
	\pagenumbering{arabic}
	\setcounter{page}{1}
	
	\maketitle
	
	\begin{flushleft}
		\textbf{\small JEL:} G12; G11; C52; C58\\
		\textbf{\small Keywords:} factor models; factor spanning; maximum Sharpe ratio;
		characteristic-axis pricing; pricing-error curves; factor construction
	\end{flushleft}
	
	\begin{abstract}
		Maximum-Sharpe spanning measures how factors expand the investment opportunity set, not where they leave pricing errors along the characteristics they are built to price. I develop a characteristic-axis diagnostic that traces model alphas across the full order induced by a predetermined characteristic. In 1967--2024 CRSP data, counterpart factors move pricing-error curves downward but land differently: profitability and momentum flatten, value overshoots on average but shifts across correction regimes, and investment crosses zero with weaker evidence. Sharpe gains do not determine these destinations, and q5's value non-rejections partly reflect wider null bands. Factor-model evaluation therefore requires both spanning and coordinate-specific pricing.
	\end{abstract}
	
	\newpage
	\section{Introduction}
	\label{sec:intro}
	
	Maximum-Sharpe spanning asks how much a factor expands the mean--variance opportunity set, not where it leaves pricing errors along the economic coordinate from which it was constructed. A low-dimensional model can therefore improve the attainable Sharpe ratio while leaving systematic alphas on a restricted return subspace. Spanning compares factor spaces without locating those errors \citep{BS17,BS18,FF18}, whereas GRS and characteristic-sorted tests evaluate supplied assets and inherit sensitivity to test-asset and bin choice \citep{GRS89,LNS10,GXZ25}. Neither uses the full within-characteristic order to show whether a factor correction stops short of, reaches, or overshoots zero.
	
	This paper develops a characteristic-axis integral diagnostic for that question. I fix a characteristic before observing test-period returns, sort stocks by it, and at each cutoff $p$ form a cap-weighted prefix portfolio. Subtracting equal exposure to the aggregate of the same valid subuniverse produces a zero-investment bridge $D^x_t(p)$ whose model alpha $\alpha^x_m(p)$ traces a pricing-error curve over the full order. Together with a zero-alpha aggregate gate, a zero bridge curve prices the prefix, tail, interval, and step-function returns generated by the axis; if the gate fails, the restriction applies conditionally to the internal bridge subspace. Signed area records direction, while $IAE$, $ISE$, and the sup norm measure absolute, concentrated, and local distortion. Conditional on the characteristic, order, and traversal measure, the path is generated mechanically rather than selected through a few bins.\footnote{The construction generalizes the cap-axis diagnostic of \citet{Shin26}; \Cref{sec:general} states the generalization.}
	
	I apply the diagnostic to value, operating profitability, investment, and momentum in 1967--2024 CRSP data. Counterpart-factor content generally shifts pricing-error curves downward, but the destinations differ. HML-based models push value through zero into significantly negative full-sample errors, with rank alphas near $-28$~bp; rolling 10-year estimates show that this overshoot varies through time rather than arising from one isolated period. Investment-factor models also cross zero, although the evidence is weaker. RMW and UMD instead bring profitability and momentum close to zero. Adding UMD to FF3, for example, moves the momentum rank alpha from $+92.6$ to $-7.1$~bp, with no functional rejecting at 5\%, providing a positive control in which factor correction aligns closely with the coordinate.
	
	These results are not produced by a uniformly more aggressive test. Across the four characteristic axes and the market-capitalization benchmark, conventional decile GRS and axis $IAE$ disagree in both directions. The axis diagnostic rejects FF5 and FF6 on value despite their passing decile GRS and the extreme-decile spread, because their negative errors extend through the interior of the order. Conversely, Carhart and FF6 fail conventional momentum tests but not rank area or $IAE$, because their remaining errors are localized and nonmonotone rather than cumulative. GRS retains broader scope over arbitrary finite asset sets and cross-anomaly restrictions; the axis diagnostic gives up that omnibus role for deeper, order-aware resolution within one prespecified coordinate.
	
	The evidence also separates frontier expansion from axis destination. FF5 explains about 76\% of the value-axis payoff and is nevertheless rejected on it, while candidate factors that produce large maximum-Sharpe gains need not bring the relevant coordinate close to zero. By the Frisch--Waugh--Lovell identity, the appraisal ratio determines the potential scale of a factor's contribution, but its signed residual alignment with the axis and the model's starting curve determine whether the enlarged model undercorrects, flattens, or overshoots. The destination also belongs to the traded factor implementation rather than to the characteristic in the abstract: removing, replacing, or repackaging counterpart factors can materially change the resulting curve even when the underlying signal is held fixed.
	
	Inference is model-specific because each nonlinear functional is calibrated to the residual covariance of that model's bridge regressions. A noisier residual curve produces wider null bands, so non-rejection can reflect weak resolution rather than favorable pricing; this distinction is especially important for q5 on value. Factor replication, conventional-decile comparisons, moving-block-bootstrap inference, finite-sample calibration, and subperiod analysis support the main conclusions, but the claims remain local. The aggregate gate and zero bridge curve price only the subspace generated by one chosen characteristic order, and empirical non-rejection means only that no distortion is detected at the implemented grid and resolution. Within those limits, the diagnostic reveals where factor correction lands along the coordinate it was built to price, complementing rather than replacing spanning and conventional specification tests.
	
	\section{Characteristic-Axis Pricing in Context}
	\label{sec:lit}
	
	\paragraph{Factor spanning and the missing diagnostic question.}
	Multifactor models explain overlapping return patterns through different factor sets and economic motivations \citep{FF93,Carhart97,FF15,HXZ15,HMXZ19,HXZ20,HMXZ21,HMXZ24}. For traded factors, relative mean--variance comparison is governed largely by factor spanning and the associated maximum Sharpe ratio rather than repeated tests on arbitrary common assets \citep{BS17,BS18,FF18}, while much SDF-relevant information may lie in a few return directions \citep{KNS18}. Taking spanning as given, I ask where factor correction leaves pricing errors along a prespecified economic coordinate. The appraisal ratio scales the potential correction, but signed residual alignment and the starting curve determine whether it stops short of zero, reaches it, or overshoots. Spanning evaluates frontier expansion; characteristic-axis pricing localizes errors on a restricted return subspace.
	
	\paragraph{Test assets and characteristic-axis diagnostics.}
	Specification tests such as GRS, Hansen--Jagannathan distance, and linear-SDF tests evaluate pricing errors on supplied assets \citep{GRS89,HJ97,Cochrane05}, so their conclusions depend on the chosen cross-section. Common factor structure, weak factors, correlated exposures, specification search, and construction choices can materially affect inference \citep{LNS10,GXZ25,LM90,HXZ20,JKP23}. Recent work therefore constructs more informative assets through managed portfolios or supervised partitions \citep{Kirby20,BPZ25}. The axis diagnostic instead fixes one characteristic, ordering direction, and traversal measure ex ante, mechanically generating the full prefix path rather than selecting bins or optimized splits. Each bridge removes equal exposure to the characteristic-valid aggregate, so the null is a zero pricing-error curve on the internal bridge subspace. Together with a zero-alpha aggregate gate, it prices the prefix, tail, interval, and step-function returns generated by the order. The procedure does not rank models globally; it provides order-aware resolution within one prespecified coordinate.
	
	\paragraph{Characteristics, factors, and traded implementations.}
	Many factors originate from firm characteristics: size and book-to-market underlie FF3, momentum captures return continuation, and investment and profitability enter later models \citep{FF92,FF93,JT93,Carhart97,TWX04,NM13,FF15,HXZ15}. I take no position on whether characteristics proxy for priced covariances or directly describe expected returns \citep{DT97}. Instead, I treat value, operating profitability, investment, and momentum as established economic coordinates and study how model alphas vary along their full orders \citep{FF08,HMXZ21}. The contribution is not to rediscover their average-return spreads, but to locate where factor correction leaves the returns those orders generate.
	
	The relevant correction belongs to a traded implementation, not to the characteristic in the abstract. Size-balanced and multidimensional sorts package signals while controlling size and diversifying firm-level noise \citep{FF93,FF15,HXZ15}, but are not designed to minimize a pricing-error curve on the underlying cap-weighted coordinate. Implementations based on similar signals can therefore differ in appraisal ratios, residual alignment, and destinations. More generally, sorted portfolios may combine priced and unpriced variation, alternative constructions need not span the same SDF directions, and anomaly results depend on breakpoints, screens, and definitions \citep{DMRS20,KN23,FF08,HXZ20,JKP23,SVV24}. An axis rejection thus concerns a particular traded construction rather than the characteristic or model family; \Cref{subsec:construction_sensitivity} tests this distinction by repackaging the same signals.
	
	\paragraph{Model-dependent resolution and the functional view.}
	Mechanical path construction does not make inference equally informative across models. Each nonlinear functional is calibrated to the bridge curve's model-specific residual covariance, so noisier models face wider null distributions and looser critical values. I therefore report model-specific 5\% thresholds and the rank-area $R^2_Y$, using the latter only as a resolution diagnostic; \Cref{subsec:robust_size_power} evaluates the resulting finite-sample asymmetry. The approach also relates to the functional interpretation of portfolio sorting. Whereas characteristic-sorted estimators and managed portfolios study expected-return functions \citep{CCFS20,Kirby20}, I estimate the bridge-alpha function $p\mapsto\alpha^x_m(p)$. Signed area records direction, $IAE$ absolute distortion, $ISE$ concentration, and $SUP$ the largest local error. The restriction is therefore zero alpha on the full characteristic-axis subspace, not on selected bins.
	
	\section{Generalizing the Axis Integral Diagnostic}
	\label{sec:general}
	
	The cap-axis diagnostic of \citet{Shin26} measures pricing errors inside the market portfolio as a function of capitalization rank. Its essential ingredients are more general: a characteristic fixed before returns are observed, a universe sortable by that characteristic, and an implementable aggregate portfolio on the same universe. The resulting characteristic-axis diagnostic turns the destination of a factor correction into a zero-curve restriction: it asks whether the pricing-error path remains on one side of zero, is flattened, or is pushed beyond it. This section defines that diagnostic and relates its curve shifts to maximum-Sharpe spanning.
	
	\subsection{Characteristic axis and bridge}
	\label{subsec:general-axis}
	
	Let $x$ be a stock-level characteristic observed at formation date $\tau(t)$, and let $\mathcal{I}^{x}_{\tau(t)}$ be the set of stocks for which it is valid. This set may be smaller than the investible market; accounting axes, for example, are defined on the valid CRSP--Compustat subuniverse. The membership rule and characteristic order must use only information available at formation.
	
	Sort stocks in descending order of $x_{i,\tau(t)}$ and assign each stock a strictly positive initial portfolio weight. A useful family is
	\begin{equation}
		\pi^x_{i,\tau(t)}
		\propto
		ME_{i,\tau(t)}^\rho,
		\qquad
		\rho\geq0,
		\label{eq:power_cap_weight}
	\end{equation}
	which includes cap weighting $(\rho=1)$ and equal weighting $(\rho=0)$. Initial weights are normalized to one and carried forward as buy-and-hold positions. The characteristic order remains fixed during the holding period, while the start-of-day portfolio values induce the current traversal measure. Suppressing this time dependence in the rank notation, let $u\in[0,1]$ denote cumulative current portfolio mass along the fixed order. Thus $p$ is a portfolio-mass cutoff, not necessarily a cross-sectional characteristic quantile.
	
	Let $r^x_t(u)$ denote the simple return at rank $u$. The corresponding characteristic-valid aggregate return and its excess return are
	\begin{equation}
		R^{A,x}_t
		=
		\int_0^1 r^x_t(u)\,du,
		\qquad
		\widetilde R^{A,x}_t
		=
		R^{A,x}_t-R^f_t.
		\label{eq:axis_aggregate_return}
	\end{equation}
	Under cap weighting, $R^{A,x}_t$ is the value-weighted aggregate of the valid-characteristic universe. It coincides with the investible market only when that valid universe coincides with the market; otherwise it remains an implementable subuniverse portfolio.
	
	For cutoff $p\in[0,1]$, define the prefix contribution $C^x_t(p)=\int_0^p r^x_t(u)\,du$ and subtract the same exposure to the aggregate:
	\begin{equation}
		D^x_t(p)
		=
		C^x_t(p)-pR^{A,x}_t.
		\label{eq:axis_bridge}
	\end{equation}
	The bridge is zero investment and closes at both endpoints,
	\begin{equation}
		D^x_t(0)=D^x_t(1)=0,
		\label{eq:axis_bridge_closed}
	\end{equation}
	and the corresponding tail bridge satisfies
	\begin{equation}
		\int_p^1r^x_t(u)\,du
		-
		(1-p)R^{A,x}_t
		=
		-D^x_t(p),
		\label{eq:axis_tail_bridge}
	\end{equation}
	so one prefix curve records the body--tail offset at every cutoff.
	
	For factor model $m$ with factor vector $f_{m,t}$, estimate
	\begin{equation}
		D^x_t(p)
		=
		\alpha^x_m(p)
		+
		\beta^x_m(p)'f_{m,t}
		+
		\varepsilon^x_{m,t}(p).
		\label{eq:axis_bridge_regression}
	\end{equation}
	The function $p\mapsto\alpha^x_m(p)$ is the characteristic-axis bridge-alpha curve. Its population zero-curve restriction is
	\begin{equation}
		H^x_0:
		\alpha^x_m(p)=0
		\quad
		\text{for all }p\in[0,1].
		\label{eq:axis_zero_curve_null}
	\end{equation}
	
	\subsection{Functional summaries and the rank-area portfolio}
	\label{subsec:axis_functionals}
	
	I summarize the bridge-alpha curve by
	\begin{align}
		SA^x_m
		&=
		\int_0^1\alpha^x_m(p)\,dp,
		\label{eq:axis_sa}
		\\
		IAE^x_m
		&=
		\int_0^1\left|\alpha^x_m(p)\right|\,dp,
		\label{eq:axis_iae}
		\\
		ISE^x_m
		&=
		\int_0^1\alpha^x_m(p)^2\,dp,
		\label{eq:axis_ise}
		\\
		SUP^x_m
		&=
		\sup_{p\in[0,1]}
		\left|\alpha^x_m(p)\right|.
		\label{eq:axis_sup}
	\end{align}
	$SA$ measures direction, $IAE$ total absolute distortion, $ISE$ quadratic concentration, and $SUP$ the largest local error. Reversing the characteristic order maps the curve into $-\alpha^x_m(1-p)$; it therefore changes the sign of $SA$ but leaves the other three functionals unchanged. In the empirical tables $ISE$ is reported as $\sqrt{ISE^x_m}$, so that all three nonlinear functionals are measured in the units of the curve itself; the square root is strictly monotone, so it leaves the tests in \Cref{subsec:finite_axis_implementation} unchanged and maps their critical values as $q_{95}(\sqrt{ISE})=\sqrt{q_{95}(ISE)}$.
	
	The signed area is the alpha of a single zero-investment return. Integrating \eqref{eq:axis_bridge} gives
	\begin{align}
		\int_0^1D^x_t(p)\,dp
		&=
		\int_0^1
		\left\{
		\int_0^pr^x_t(u)\,du
		-
		pR^{A,x}_t
		\right\}dp
		\nonumber\\
		&=
		\int_0^1
		\left(\frac12-u\right)
		r^x_t(u)\,du
		\;\equiv\;
		Y^x_t,
		\label{eq:axis_rank_area}
	\end{align}
	so that, by linearity of the alpha functional,
	\begin{equation}
		\alpha_m(Y^x)
		=
		\int_0^1\alpha^x_m(p)\,dp
		=
		SA^x_m.
		\label{eq:axis_rank_area_alpha}
	\end{equation}
	Thus $SA$ is the alpha of a long--short portfolio with linear rank weights $\frac12-u$: it is long the high-characteristic side and short the low-characteristic side, with intensity increasing toward the endpoints. Changing the traversal measure changes these portfolio weights but not the logic of the construction.
	
	\subsection{Restricted characteristic-axis pricing}
	\label{subsec:restricted_axis_pricing}
	
	For a fixed characteristic universe, order, and admissible measure, define
	\begin{equation}
		\mathcal{V}_x
		=
		\overline{
			\operatorname{span}
			\left\{
			\widetilde R^{A,x},
			D^x(p):p\in[0,1]
			\right\}
		}.
		\label{eq:axis_subspace}
	\end{equation}
	
	\begin{assumption}[Regularity]
		\label{ass:axis_regular}
		All test returns and factor returns have finite second moments, and $\operatorname{Var}(f_m)$ is nonsingular.
	\end{assumption}
	
	\begin{definition}[Admissible measure]
		\label{def:admissible_measure}
		A formation-date measure is \emph{admissible} if its initial weights use only information available at $\tau(t)$, place strictly positive mass on every valid stock, have finite total mass, and evolve thereafter only through the returns on the predetermined holdings. The power-of-cap family in \eqref{eq:power_cap_weight} is admissible for every $\rho\geq0$.
	\end{definition}
	
	\begin{proposition}[Restricted characteristic-axis pricing]
		\label{prop:restricted_axis_pricing}
		Under \Cref{ass:axis_regular}, model $m$ prices every return in $\mathcal{V}_x$ if and only if
		\begin{equation}
			\alpha_m(\widetilde R^{A,x})=0
			\quad\text{and}\quad
			\alpha_m(D^x(p))=0
			\quad
			\text{for all }p\in[0,1].
			\label{eq:restricted_axis_condition}
		\end{equation}
	\end{proposition}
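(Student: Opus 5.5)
The plan is to show that the alpha functional is linear and continuous in the $L^2$ norm, and then to pass from the generating set of $\mathcal{V}_x$ to its closed span. Under \Cref{ass:axis_regular}, write $\operatorname{Var}(f_m)=\Sigma_f$. For any return $R$ with a finite second moment, the time-series projection gives
\begin{equation*}
	\alpha_m(R)=\mathbb{E}[R]-\operatorname{Cov}(R,f_m)'\Sigma_f^{-1}\mathbb{E}[f_m].
\end{equation*}
This formula is well defined because $\Sigma_f$ is nonsingular, and it is linear in $R$. So ``model $m$ prices $R$'' means $\alpha_m(R)=0$.

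Necessity is immediate. $\widetilde R^{A,x}$ and every $D^x(p)$ belong to $\mathcal{V}_x$ by \eqref{eq:axis_subspace}, so if $m$ prices every return in $\mathcal{V}_x$, then \eqref{eq:restricted_axis_condition} holds.

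For sufficiency, I proceed in two steps.

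\emph{Finite span.} Any finite linear combination $c_0\widetilde R^{A,x}+\sum_k c_k D^x(p_k)$ has alpha
\begin{equation*}
	c_0\alpha_m(\widetilde R^{A,x})+\sum_k c_k\alpha_m(D^x(p_k)),
\end{equation*}
by linearity, and this is zero under \eqref{eq:restricted_axis_condition}. So $\alpha_m$ vanishes on the algebraic span.

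\emph{Closure.} This is the step that needs care. I will bound $|\alpha_m(R)|$ by a constant times $\|R\|_{L^2}$. First, $|\mathbb{E}[R]|\le\|R\|_{L^2}$. Second, by Cauchy--Schwarz, each component satisfies $|\operatorname{Cov}(R,f_{m,j})|\le\|R\|_{L^2}\,\operatorname{sd}(f_{m,j})$. The factor terms $\Sigma_f^{-1}\mathbb{E}[f_m]$ are finite constants, again using \Cref{ass:axis_regular}. Together these give
\begin{equation*}
	|\alpha_m(R)|\le K_m\|R\|_{L^2}
\end{equation*}
for a constant $K_m$. Hence $\alpha_m$ is a bounded linear functional on $L^2$, and its kernel is closed. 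A closed subspace that contains the span also contains its closure $\mathcal{V}_x$, so $\alpha_m$ vanishes on all of $\mathcal{V}_x$.

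The main obstacle is making the closure precise: which norm defines the overline in \eqref{eq:axis_subspace}, and whether limits of zero-investment and excess returns are still ``returns'' that the model should price. I would take the closure in $L^2$, the norm made natural by \Cref{ass:axis_regular}. I would also note that the bridges and the excess aggregate return are zero-cost payoffs, so their linear combinations and limits stay in the space of excess returns on which the alpha is defined. If the paper's intended closure were weaker (for example, almost-sure limits), I would fall back on requiring $L^2$ convergence, which the finite-second-moment assumption supports.
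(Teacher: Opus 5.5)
Your proposal is correct and follows the paper's proof essentially step for step. You write the OLS alpha as a linear functional, obtain necessity because the generators lie in $\mathcal{V}_x$, and obtain sufficiency by linearity on finite combinations and then continuity on the closure. Your explicit Cauchy--Schwarz bound $|\alpha_m(R)|\le K_m\|R\|_{L^2}$ fills in the boundedness claim the paper asserts without derivation, and taking the closure in $L^2$ is the reading the paper's argument implicitly requires.
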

	
	\begin{proof}
		For any square-integrable excess or zero-investment return $R$, the OLS alpha functional is
		\begin{equation}
			\alpha_m(R)
			=
			E[R]
			-
			E[f_m]'
			\operatorname{Var}(f_m)^{-1}
			\operatorname{Cov}(f_m,R).
			\label{eq:alpha_functional_general}
		\end{equation}
		Under \Cref{ass:axis_regular}, this is a bounded linear functional. Necessity follows because the generators belong to $\mathcal{V}_x$. For sufficiency, \eqref{eq:restricted_axis_condition} sets the functional to zero on every generator. Linearity extends the result to finite linear combinations, and continuity extends it to their closed span.
	\end{proof}
	
	The proposition is a population statement. Because $C^x_t(p)-pR^f_t=D^x_t(p)+p\widetilde R^{A,x}_t$, the aggregate gate and zero bridge curve jointly price every prefix excess return; differences of prefixes generate interval portfolios, and \eqref{eq:axis_tail_bridge} generates tails. The restriction is therefore complete within $\mathcal{V}_x$, but says nothing about industry, other-characteristic, or idiosyncratic directions outside that space. This is a coordinate-specific specification requirement, not a statement about the size or mean--variance quality of the factor span. Expanding the attainable mean--variance frontier does not by itself imply that the pricing-error curve is zero on $\mathcal{V}_x$.
	
	\begin{remark}[Measure choice and the canonical implementation]
		\label{rem:measure_invariance}
		The proposition holds for any admissible measure, but the resulting axis depends on that measure. The characteristic determines the order; the measure determines traversal, aggregate weighting, and rank-area portfolio weights. Cap weighting is canonical because it produces the value-weighted aggregate of the valid universe and approaches the investible market interpretation as coverage approaches one. Equal weighting is another admissible axis, not an invariant version of the cap-weighted test. The empirical analysis fixes $\rho=1$ throughout.
	\end{remark}
	
	\begin{remark}[Interpreting aggregate-gate failure]
		\label{rem:gate_failure}
		If $\alpha_m(\widetilde R^{A,x})\neq0$, a zero bridge curve prices only
		$\overline{\operatorname{span}}\{D^x(p):p\in[0,1]\}$,
		not the full space $\mathcal{V}_x$. For accounting axes, aggregate alpha may partly reflect selection into the valid CRSP--Compustat universe. Results for gate-failing models are therefore interpreted conditionally as statements about the internal zero-investment bridge subspace.
	\end{remark}
	
	\subsection{Finite-grid implementation, inference, and calibration}
	\label{subsec:finite_axis_implementation}
	
	The population restriction ranges over all $p\in[0,1]$. I approximate it using a finite stock universe and a prespecified grid $P=\{p_1,\ldots,p_L\}$, so non-rejection means that no distortion is detected at the implemented grid resolution.
	
	The empirical analysis uses the cap measure. Let $w^x_{i,t}$ be the start-of-day buy-and-hold wealth share of stock $i$ on holding day $t$, with
	$\sum_{i=1}^{N^x_t}w^x_{i,t}=1$. Stocks retain their formation-date characteristic order. Define
	\begin{equation}
		c^x_{i,t}
		=
		\sum_{j=1}^iw^x_{j,t},
		\qquad
		\mu^x_{i,t}
		=
		c^x_{i,t}
		-
		\frac12w^x_{i,t},
		\label{eq:finite_axis_midpoint}
	\end{equation}
	and for $p\in(0,1]$ let $k_t(p)=\min\{i:c^x_{i,t}\geq p\}$ and $s_t(p)=c^x_{k_t(p),t}$, with $k_t(0)=s_t(0)=0$. The whole-stock bridge and midpoint rank-area return are
	\begin{equation}
		D^x_t(p)
		=
		\sum_{i=1}^{k_t(p)}
		w^x_{i,t}R_{i,t}
		-
		s_t(p)R^{A,x}_t,
		\qquad
		Y^x_t
		=
		\sum_{i=1}^{N^x_t}
		w^x_{i,t}R_{i,t}
		\left(
		\frac12-\mu^x_{i,t}
		\right).
		\label{eq:finite_axis_bridge}
	\end{equation}
	The bridge longs a whole-stock prefix and shorts the same realized aggregate exposure. Because the boundary stock is not split, $s_t(p)$ may exceed $p$, although closure at both endpoints remains exact. The grid-integrated bridge and midpoint rank-area return therefore need not coincide exactly. I test $SA$ through the rank-area alpha and report the difference as whole-stock boundary approximation error. Across all six models and four axes, the absolute gap never exceeds $0.63$ bp per year, with a median of $0.39$ bp, two orders of magnitude below the distortions studied here.
	
	At each grid point I estimate \eqref{eq:axis_bridge_regression} and collect the intercepts in
	$\widehat{\bm{\alpha}}^x_m$, whose $\ell$th entry is $\widehat\alpha^x_m(p_\ell)$. The empirical integration weights and grid maximum produce the finite-grid versions of $IAE$, $ISE$, and $SUP$.
	
	Let $\bm q_{m,t}=(1,f_{m,t}')'$, $Q_m=E[\bm q_{m,t}\bm q_{m,t}']$, and let $\bm e_1$ select the intercept. With residual vector
	$\bm\varepsilon^x_{m,t}=(\varepsilon^x_{m,t}(p_1),\ldots,\varepsilon^x_{m,t}(p_L))'$,
	the influence vector for the intercept grid is
	$\bm\psi^x_{m,t}=(\bm e_1'Q_m^{-1}\bm q_{m,t})\bm\varepsilon^x_{m,t}$.
	Under standard weak-dependence and moment conditions,
	\begin{equation}
		\sqrt T
		\left(
		\widehat{\bm\alpha}^x_m
		-
		\bm\alpha^x_m
		\right)
		\Rightarrow
		N(0,\Omega^x_m),
		\qquad
		\Omega^x_m
		=
		\sum_{h=-\infty}^{\infty}
		E\!\left[
		\bm\psi^x_{m,t}
		\bm\psi^{x\prime}_{m,t-h}
		\right],
		\label{eq:axis_alpha_influence}
	\end{equation}
	so the asymptotic covariance of the intercept estimator is $\Omega^x_m/T$.\footnote{The only inverse is $Q_m^{-1}$, the second-moment matrix of an intercept and at most six factors. The grid covariance matrix $\Omega^x_m$ is never inverted and may be singular because the endpoint bridges are identically zero.}
	
	I test $SA$ using the HAC $t$-test for the rank-area alpha. For $IAE$, $ISE$, and $SUP$, I estimate the long-run covariance of the influence vectors by Newey--West HAC, impose positive semidefiniteness to obtain $\widehat\Omega^x_{m,+}$, and simulate
	\begin{equation}
		\bm A^{x,(b)}_m
		\sim
		N\!\left(
		\bm0,
		\widehat\Omega^x_{m,+}/T
		\right),
		\qquad
		b=1,\ldots,B.
		\label{eq:axis_hac_gp_draw}
	\end{equation}
	Applying the same grid operations to each draw gives the null distribution of each nonlinear functional, with upper-tail p-value
	\begin{equation}
		\widehat p_F
		=
		\frac{
			1+
			\sum_{b=1}^B
			\mathbf 1
			\left\{
			F(\bm A^{x,(b)}_m)
			\geq
			F(\widehat{\bm\alpha}^x_m)
			\right\}
		}{
			B+1
		},
		\qquad
		F\in\{IAE,ISE,SUP\}.
		\label{eq:axis_hac_gp_pvalue}
	\end{equation}
	With $B=50{,}000$, the minimum attainable p-value is
	$1/(B+1)=2\times10^{-5}$; entries reported as $<0.0001$ attain this bound. The procedure approximates the sampling distribution of the intercept estimator, with serial and cross-grid dependence entering through the HAC covariance. Critical values are model-specific because they use each model's residual covariance; \Cref{subsec:robust_size_power} evaluates the resulting resolution differences in finite samples. The p-values are model--axis-specific, and I do not combine axes into a joint null.
	
	For finite-sample size and power, I hold the observed monthly factor matrix, fitted loadings, and centered residual curves fixed and simulate the zero-curve null and empirical-shape alternatives
	$\bm\alpha^x_m(\lambda)=\lambda\widehat{\bm\alpha}^x_m$,
	$\lambda\in\{0,0.25,0.5,0.75,1\}$. The case $\lambda=0$ evaluates size; positive $\lambda$ preserves the estimated curve's location and shape while scaling its magnitude. \Cref{subsec:robust_size_power} reports rejection frequencies under HAC-GP and block-bootstrap critical values.
	
	\subsection{From maximum-Sharpe gains to axis destination}
	\label{subsec:axis_sr_relation}
	
	The population restriction prices only $\mathcal V_x$, while an empirical pass means only that the finite-grid tests detect no pricing-error structure at the implemented resolution.
	
	Consider adding candidate factor $g$ to base model $m$. Let
	\begin{equation}
		g_t
		=
		\alpha_m(g)
		+
		b_g'f_{m,t}
		+
		\varepsilon_{g,t},
		\qquad
		E[\varepsilon_g]=0,
		\qquad
		\operatorname{Cov}(\varepsilon_g,f_m)=0,
		\label{eq:factor_residualization}
	\end{equation}
	and define the signed appraisal ratio
	\begin{equation}
		IR_g
		=
		\frac{\alpha_m(g)}{\sigma(\varepsilon_g)}.
		\label{eq:signed_appraisal_ratio}
	\end{equation}
	The factor's contribution to the squared maximum Sharpe ratio is
	\begin{equation}
		SR^2(m\cup g)-SR^2(m)
		=
		IR_g^2,
		\label{eq:sr2_decomposition}
	\end{equation}
	so frontier expansion depends on $|IR_g|$, not its sign or the economic coordinate with which the residual factor aligns.
	
	By Frisch--Waugh--Lovell, adding $g$ changes the alpha of any test return $R$ according to
	\begin{equation}
		\alpha_{m\cup g}(R)
		=
		\alpha_m(R)
		-
		\beta_{R,\varepsilon_g}\alpha_m(g),
		\qquad
		\beta_{R,\varepsilon_g}
		=
		\frac{\operatorname{Cov}(R,\varepsilon_g)}
		{\operatorname{Var}(\varepsilon_g)}.
		\label{eq:fwl_alpha_shift}
	\end{equation}
	\footnote{\citet{FrischWaugh33} and \citet{Lovell63}; see also \citet{Lovell08}. In asset pricing, the same algebra underlies the appraisal-ratio results of \citet{GRS89} and \citet{BS17}.}
	
	Applying \eqref{eq:fwl_alpha_shift} pointwise to the bridge gives
	\begin{equation}
		\Delta\alpha^x(p;g\mid m)
		\equiv
		\alpha^x_{m\cup g}(p)
		-
		\alpha^x_m(p)
		=
		-
		\operatorname{corr}
		\left(
		D^x_{\perp m}(p),
		\varepsilon_g
		\right)
		\sigma\!\left(D^x_{\perp m}(p)\right)
		IR_g,
		\label{eq:curve_shift}
	\end{equation}
	where $D^x_{\perp m}(p)$ is the residual from projecting the bridge at cutoff $p$ on the base-model factors and an intercept. The enlarged model leaves the axis at
	\begin{equation}
		\alpha^x_{m\cup g}(p)
		=
		\alpha^x_m(p)
		+
		\Delta\alpha^x(p;g\mid m).
		\label{eq:axis_destination}
	\end{equation}
	Equation \eqref{eq:curve_shift} separates frontier expansion from axis destination: $IR_g$ scales the potential correction, residual alignment determines its cutoff-specific sign and transmission, and the starting curve determines whether the enlarged model undercorrects, flattens, or overshoots the axis.
	
	Integrating over the axis, or applying \eqref{eq:fwl_alpha_shift} directly to the rank-area return, gives
	\begin{equation}
		\Delta SA^x(g\mid m)
		\equiv
		SA^x_{m\cup g}
		-
		SA^x_m
		=
		-
		\operatorname{corr}
		\left(
		Y^x_{\perp m},
		\varepsilon_g
		\right)
		\sigma(Y^x_{\perp m})
		IR_g,
		\label{eq:alpha_shift}
	\end{equation}
	where $Y^x_{\perp m}$ is the residual from projecting $Y^x$ on the base-model factors and an intercept. Equation \eqref{eq:alpha_shift} is the scalar counterpart of the full curve shift, summarizing its average signed movement without retaining the cutoff-specific shape.
	
	Equations \eqref{eq:curve_shift} and \eqref{eq:alpha_shift} are accounting identities, not independent predictions. Their empirical content lies in whether candidate factors and alternative constructions materially change the appraisal ratio, residual alignment, and resulting destination---objects that maximum-Sharpe spanning alone does not recover.
	
	Two properties guide the empirical analysis. First, the signed shift is invariant to factor rescaling: multiplying $g$ by a nonzero constant changes $IR_g$ and the residual correlation in offsetting ways, leaving the alpha shift unchanged. Axis distortion therefore cannot be reduced by reporting the same factor in smaller units. Second, both $IR_g$ and residual alignment belong to the traded portfolio rather than the characteristic in the abstract, so portfolios formed from the same signal can produce different shifts and destinations. \Cref{subsec:construction_sensitivity} tests this distinction by repackaging the same characteristic information. In the empirical tables, $\Delta SA$ follows the signed convention
	\begin{equation}
		SA^x_{m\cup g}
		=
		SA^x_m
		+
		\Delta SA^x(g\mid m).
		\label{eq:sa_destination}
	\end{equation}
	
	\section{Data and Methodology}
	\label{sec:data}
	
	I use daily CRSP stock returns \citep{CRSP}, Compustat annual fundamentals \citep{compustat}, the Fama--French Data Library \citep{FrenchDataLibrary}, q-factor provider data \citep{globalqFactors}, and Global Factor Data \citep{GFD}. The construction proceeds through three nested universes: a month-end capitalization and liquidity screen defines the CRSP investible universe; a daily capitalization screen produces the common clean market and internal return $cMKT$; and each formation date restricts that market to stocks with a valid value, profitability, investment, or momentum signal.
	
	For accounting axes formed in June of year $Y$, I require $fyear=Y-1$ exactly; investment additionally requires $fyear=Y-2$. Firms with older fundamentals are dropped rather than carried forward, and the same convention governs the factor replications in \Cref{subsec:robust_factor_replication}. All tests cover January 1967--December 2024 $(N=696)$, but formation inputs are staged earlier when needed: the 1967 accounting portfolios inherit June 1966 formations, and January 1967 momentum uses returns beginning in January 1966. No reported regression includes returns before 1967.
	
	\subsection{CRSP investible universe and production clean market}
	\label{subsec:crsp-universe}
	
	The base sample contains NYSE, AMEX, and NASDAQ common stocks from January 3, 1967 through December 31, 2024. At each month-end, capitalization and average dollar volume over the previous 63 trading days determine investibility, with asymmetric entry and exit thresholds to reduce boundary turnover.\footnote{Market capitalization is $ME=|PRC|\times SHROUT\times1{,}000$. An incumbent exits if its cumulative capitalization share exceeds 99.9\% or liquidity falls below the monthly 2.5th percentile; an outsider enters only within the 99.5\% capitalization share and at or above the fifth liquidity percentile. The state applies from the next month's first trading day. Early NASDAQ observations are exempt from the liquidity condition until 63 trading days are available. Adjusted returns compound ordinary and delisting returns when both exist and use the delisting return alone when only it is observed.} The screen retains 77.6\% of common stocks on average while preserving approximately 99.7\% of market capitalization and dollar volume. At year-end 2024 it contains 2,426 of 3,804 stocks and preserves 99.78\% of capitalization and 99.28\% of dollar volume.
	
	Within this universe, let $\mathcal{E}_t$ contain stocks with positive lagged capitalization and an observed adjusted return. Ordering them by lagged capitalization, the production clean market $\mathcal{C}_t$ is the smallest whole-stock prefix reaching $\chi=0.9975$:
	\begin{equation}
		\frac{\sum_{i\in\mathcal{C}_t}ME_{i,t-1}}
		{\sum_{i\in\mathcal{E}_t}ME_{i,t-1}}
		\geq\chi,
		\qquad
		cMKT_t
		=
		\frac{\sum_{i\in\mathcal{C}_t}ME_{i,t-1}R^{adj}_{i,t}}
		{\sum_{i\in\mathcal{C}_t}ME_{i,t-1}}.
		\label{eq:production_clean_market}
	\end{equation}
	The boundary stock is retained in full. This characteristic-neutral screen supplies the common formation mask and an internal market benchmark; it is distinct from the buy-and-hold characteristic-valid aggregate used in each bridge. $cMKT$ correlates 0.999778 with the Fama--French market and 0.999799 with the q5 market, with daily RMSEs of 2.21 and 2.10 bp. Its candidate-model regressions have $R^2>0.999$ and insignificant Newey--West 21-day alphas.
	
	The benchmark models are CAPM, FF3, Carhart, FF5, FF6, and q5. FF6 adds momentum to FF5, while q5 contains market, size, investment, return-on-equity, and expected-growth factors. Baseline tests use published monthly factors, aligning the testing horizon with factor construction and reducing short-horizon nonsynchronous-trading concerns.\footnote{Daily tests preserve the main sign pattern, but Dimson corrections with $k=1,\ldots,21$ move the curves substantially and non-monotonically, attenuating negative counterpart-factor curves and enlarging positive curves in models without the counterpart factor. No single lead--lag specification provides a stable daily benchmark.}
	
	\subsection{Reconstructed counterpart factors}
	\label{subsec:prime-factors}
	
	To vary factor packaging while preserving the characteristic signal, I construct univariate counterparts from Fama--French one-dimensional value-weighted portfolios. For book-to-market, operating profitability, and investment, the reconstructed factors are the published $30/70$ spreads
	\begin{equation}
		\begin{aligned}
			\mathrm{HML}'_t
			&=
			\mathrm{Hi30}^{BM}_t
			-
			\mathrm{Lo30}^{BM}_t,
			\\
			\mathrm{RMW}'_t
			&=
			\mathrm{Hi30}^{OP}_t
			-
			\mathrm{Lo30}^{OP}_t,
			\\
			\mathrm{CMA}'_t
			&=
			\mathrm{Lo30}^{INV}_t
			-
			\mathrm{Hi30}^{INV}_t,
		\end{aligned}
		\label{eq:prime_factors}
	\end{equation}
	with investment signed conservative-minus-aggressive. Because prior-12--2 momentum is distributed as deciles, I define
	\begin{equation}
		\mathrm{UMD}'_t
		=
		\frac{1}{3}
		\left(
		D_{8,t}+D_{9,t}+D_{10,t}
		\right)
		-
		\frac{1}{3}
		\left(
		D_{1,t}+D_{2,t}+D_{3,t}
		\right).
		\label{eq:umd_prime}
	\end{equation}
	This compares the upper and lower 30\% of the momentum order, although the decile returns are formed separately and then equally averaged.
	
	Each reconstruction retains the sorting variable, breakpoint convention, value weighting, and formation schedule, but removes the explicit small--big split and cross-size averaging of the canonical $2\times3$ factor. It therefore changes several portfolio-design features simultaneously and is interpreted as alternative packaging, not as a clean causal test of size neutralization. Replacement specifications include either the canonical factor or its reconstruction, never both.
	
	\subsection{Valid-characteristic subuniverse and aggregate gate}
	\label{subsec:valid-subuniverse}
	
	At formation date $\tau$, axis $x$ restricts the production clean market to stocks with positive market equity and a valid characteristic:
	\begin{equation}
		\mathcal{I}^{x}_{\tau}
		=
		\left\{
		i\in\mathcal{C}_{\tau}:
		ME_{i,\tau}>0,\
		x_{i,\tau}\text{ is valid}
		\right\},
		\qquad
		R^{A,x}_t
		=
		\sum_{i\in\mathcal{I}^{x}_{\tau}}
		w^x_{i,t}R^{adj}_{i,t}.
		\label{eq:characteristic_valid_universe_data}
	\end{equation}
	Formation-date capitalization is renormalized within $\mathcal{I}^{x}_{\tau}$, so stocks without valid signals enter neither the rank path nor its aggregate short leg. The resulting $R^{A,x}_t$ is an axis-specific buy-and-hold aggregate and generally differs from daily reconstructed $cMKT_t$, especially for accounting axes with incomplete Compustat coverage.
	
	Before testing the bridge curve, I estimate
	\begin{equation}
		R^{A,x}_t-R^f_t
		=
		a^{A,x}_m
		+
		b^{A,x\prime}_m f_{m,t}
		+
		e^{A,x}_{m,t}.
		\label{eq:aggregate_gate}
	\end{equation}
	The $cMKT$ regressions validate the common market implementation; this aggregate gate asks whether the model prices the characteristic-valid universe itself. If it fails, bridge results are interpreted conditionally on the internal zero-investment subspace.
	
	\subsection{Bridge-path implementation}
	\label{subsec:bridge-path-implementation}
	
	At formation, the engine applies characteristic validity to the clean market and sorts surviving stocks by the predetermined signal, breaking ties by descending market equity and ascending PERMNO. The universe and order remain fixed through the holding period: July--June for annual accounting axes and the next calendar month for momentum. Prefix membership at a fixed mass cutoff may still change as buy-and-hold wealth shares evolve.\footnote{Portfolio accounting uses total and ex-dividend returns. Dividend cash is reinvested across surviving positions in proportion to positive ex-dividend values; positions without usable total returns are removed and remaining wealth is renormalized.}
	
	The baseline path uses 201 equally spaced cutoffs on $[0,1]$. At each date, the first whole-stock prefix reaching target $p$ is selected and the same realized aggregate exposure $s_t(p)$ is shorted. Every run verifies endpoint closure, body--tail recombination, and agreement between the integrated bridge and midpoint rank-area alpha; the latter differ by at most 0.63 bp per year across all models and axes.
	
	Daily bridge and rank-area payoffs are accumulated within each month:
	\begin{equation}
		D^x_M(p)
		=
		\sum_{t\in\mathcal D(M)}D^x_t(p),
		\qquad
		Y^x_M
		=
		\sum_{t\in\mathcal D(M)}Y^x_t.
		\label{eq:monthly_bridge_aggregation}
	\end{equation}
	Arithmetic accumulation is appropriate because these are zero-investment payoff increments. The positive-wealth aggregate is instead compounded:
	\begin{equation}
		R^{A,x}_M
		=
		\prod_{t\in\mathcal D(M)}
		\left(1+R^{A,x}_t\right)-1.
		\label{eq:monthly_aggregate_compounding}
	\end{equation}
	Monthly bridge regressions then use the corresponding published monthly factors.
	
	\subsection{Empirical comparison and factor scan}
	\label{subsec:empirical-design}
	
	For each model--axis pair, I report the rank-area alpha, $IAE$, $\sqrt{ISE}$, and $SUP$, together with their p-values, model-specific 5\% critical values, and $R^2_Y$. The candidate-factor scan adds each of 155 monthly factors separately to CAPM and records its maximum-Sharpe gain and resulting axis distortion; the reconstructed factors are excluded from this library. Finally, the replacement experiment substitutes each reconstructed counterpart factor while holding the tested axis and remaining model fixed. These exercises distinguish frontier expansion, coordinate-specific correction, and sensitivity to traded factor implementation.
	
	\section{Empirical Results}
	\label{sec:result}
	
	This section applies the monthly characteristic-axis diagnostic to value, operating profitability, investment, and momentum, corresponding to HML, RMW/ROE, CMA/IA, and UMD; the cap axis is studied in \citet{Shin26}. The baseline tables report full-sample monthly results, while \Cref{subsec:subperiod} examines whether the headline value destination is stable across calendar and rolling subperiods. \Cref{tab:aggregate-gate-all} reports the valid-characteristic aggregate gates. CAPM, FF5, FF6, and q5 pass all four; FF3 and Carhart fail selected accounting-axis gates, whereas every momentum gate passes. As in \Cref{rem:gate_failure}, bridge results for gate-failing models are interpreted conditionally on the internal zero-investment subspace.
	
	\begin{table}[!t]
		\centering
		\singlespacing
		\caption{Aggregate-gate regressions across the four characteristic axes}
		\label{tab:aggregate-gate-all}
		\footnotesize
		\setlength{\tabcolsep}{4.0pt}
		\begin{tabular}{lrrrrrrrrrrrr}
			\toprule
			& \multicolumn{3}{c}{Value} & \multicolumn{3}{c}{Operating prof.} & \multicolumn{3}{c}{Investment} & \multicolumn{3}{c}{Momentum} \\
			\cmidrule(lr){2-4}\cmidrule(lr){5-7}\cmidrule(lr){8-10}\cmidrule(lr){11-13}
			Model & $\alpha$ & $t$ & $p$ & $\alpha$ & $t$ & $p$ & $\alpha$ & $t$ & $p$ & $\alpha$ & $t$ & $p$ \\
			\midrule
			CAPM    &  20.01 &  1.51 & 0.130 &  19.68 &  1.49 & 0.137 &  23.66 &  1.57 & 0.116 &  8.92 &  1.20 & 0.229 \\
			FF3     &  21.53 &  1.96 & 0.050 &  21.86 &  1.99 & 0.047 &  21.80 &  1.74 & 0.082 &  5.28 &  0.85 & 0.396 \\
			Carhart &  29.01 &  2.33 & 0.020 &  29.71 &  2.38 & 0.017 &  29.07 &  2.07 & 0.038 & 11.22 &  1.65 & 0.099 \\
			FF5     &   5.11 &  0.45 & 0.652 &   6.13 &  0.54 & 0.591 &  -1.84 & -0.15 & 0.879 & -5.54 & -0.83 & 0.408 \\
			FF6     &  12.31 &  1.06 & 0.288 &  13.57 &  1.16 & 0.244 &   5.77 &  0.47 & 0.642 &  0.06 &  0.01 & 0.992 \\
			q5      &  -6.48 & -0.50 & 0.619 &  -4.23 & -0.33 & 0.743 & -15.24 & -1.07 & 0.284 & -0.73 & -0.10 & 0.919 \\
			\bottomrule
		\end{tabular}
		
		\vspace{0.4em}
		\begin{minipage}{0.94\textwidth}
			\footnotesize
			\emph{Note:} Monthly regressions of each characteristic-valid aggregate return on each factor model; $N=696$ throughout. $\alpha$ is the monthly intercept multiplied by 12 and reported in annualized basis points. Tests use Newey--West standard errors with six monthly lags. Regression $R^2$ lies between 0.994 and 0.999. Correlations with the full investible market are 0.9968 on value, 0.9968 on operating profitability, 0.9967 on investment, and 0.9997 on momentum.
		\end{minipage}
	\end{table}
	
	\Cref{fig:axis-fingerprint} displays the four full-sample curve families, and the accompanying tables report signed area, nonlinear functionals, model-specific critical values, and $R^2_Y$. Counterpart-factor content generally shifts the curves downward; the central questions are where the shift lands and how precisely that destination is estimated. Cross-model threshold comparisons are descriptive because each null distribution depends on the model's full residual covariance curve.
	
	\begin{figure}[!t]
		\centering
		\begin{minipage}{0.48\textwidth}
			\centering
			\includegraphics[width=\linewidth]{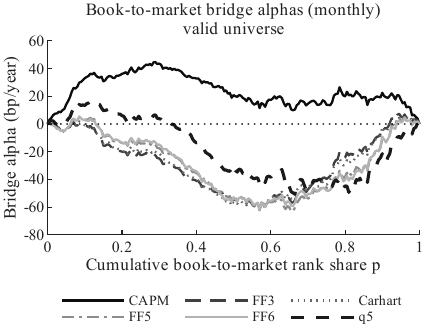}
			\caption*{A. Value}
		\end{minipage}
		\hfill
		\begin{minipage}{0.48\textwidth}
			\centering
			\includegraphics[width=\linewidth]{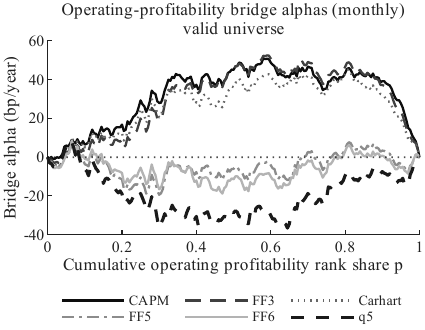}
			\caption*{B. Operating profitability}
		\end{minipage}
		
		\vspace{0.3em}
		
		\begin{minipage}{0.48\textwidth}
			\centering
			\includegraphics[width=\linewidth]{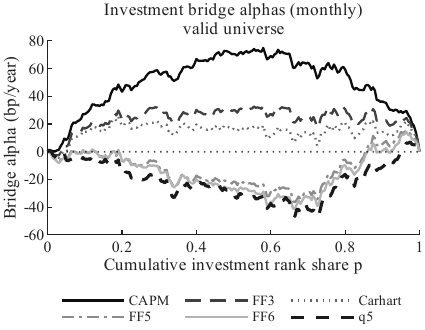}
			\caption*{C. Investment}
		\end{minipage}
		\hfill
		\begin{minipage}{0.48\textwidth}
			\centering
			\includegraphics[width=\linewidth]{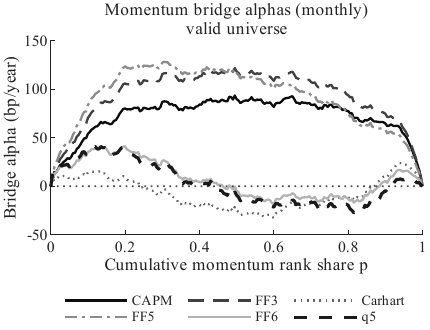}
			\caption*{D. Momentum}
		\end{minipage}
		
		\caption{Bridge-alpha curves on the four characteristic axes}
		\label{fig:axis-fingerprint}
		
		\vspace{0.3em}
		\begin{minipage}{0.94\linewidth}
			\footnotesize
			\emph{Note:} Monthly bridge-alpha curves within each valid-characteristic universe. The horizontal axis is cumulative cap-weight share along the characteristic order; alphas are annualized basis points. \textbf{Vertical scales differ across panels}; cross-axis magnitudes should be compared using the functionals in \Cref{tab:value-axis-main}--\Cref{tab:mom-axis-main}.
		\end{minipage}
	\end{figure}
	
	\FloatBarrier
	
	
	\subsection{Value}
	\label{subsec:value_results}
	
	The value axis sorts stocks each June from high to low by
	\[
	x^{BM}_{i,Y}=\log\left(\frac{BE_{i,Y-1}}{ME^{Dec}_{i,Y-1}}\right),
	\]
	requiring positive book equity and December market equity, and holds the order from July through the following June. The valid universe covers 81.2\% of investible capitalization on average and 87.9\% at the final formation date. Carhart fails the aggregate gate, while FF3 lies at the 5\% boundary.
	
	\begin{table}[!htbp]
		\centering
		\singlespacing
		\caption{Value-axis bridge-alpha diagnostics: monthly baseline}
		\label{tab:value-axis-main}
		\footnotesize
		\setlength{\tabcolsep}{4pt}
		\begin{tabular}{l rrr rrr rrr rrr}
			\toprule
			& \multicolumn{3}{c}{Rank area}
			& \multicolumn{3}{c}{$IAE$}
			& \multicolumn{3}{c}{$\sqrt{ISE}$}
			& \multicolumn{3}{c}{$SUP$} \\
			\cmidrule(lr){2-4}\cmidrule(lr){5-7}\cmidrule(lr){8-10}\cmidrule(lr){11-13}
			Model
			& Rank $\alpha$ & $p$ & $R^2_Y$
			& Stat & $q_{95}$ & $p$
			& Stat & $q_{95}$ & $p$
			& Stat & $q_{95}$ & $p$ \\
			\midrule
			CAPM    &  24.3\nostar          & 0.3226 & 0.029 & 24.1\nostar          & 47.8 & 0.3205 & 26.3\nostar          & 52.9 & 0.3425 & 44.7\nostar          & 79.6 & 0.3365 \\
			FF3     & -27.8$^{\ast\ast\ast}$ & 0.0050 & 0.743 & 28.2$^{\ast\ast\ast}$ & 19.7 & 0.0041 & 34.3$^{\ast\ast\ast}$ & 23.8 & 0.0032 & 61.0$^{\ast\ast\ast}$ & 44.6 & 0.0034 \\
			Carhart & -28.0$^{\ast\ast\ast}$ & 0.0037 & 0.743 & 28.3$^{\ast\ast\ast}$ & 19.2 & 0.0029 & 34.5$^{\ast\ast\ast}$ & 23.1 & 0.0024 & 61.7$^{\ast\ast\ast}$ & 44.2 & 0.0023 \\
			FF5     & -28.6$^{\ast\ast\ast}$ & 0.0030 & 0.760 & 28.8$^{\ast\ast\ast}$ & 19.0 & 0.0032 & 35.4$^{\ast\ast\ast}$ & 23.0 & 0.0023 & 63.2$^{\ast\ast\ast}$ & 44.4 & 0.0027 \\
			FF6     & -28.1$^{\ast\ast\ast}$ & 0.0030 & 0.760 & 28.3$^{\ast\ast\ast}$ & 18.7 & 0.0025 & 35.0$^{\ast\ast\ast}$ & 22.6 & 0.0019 & 61.8$^{\ast\ast\ast}$ & 43.7 & 0.0023 \\
			q5      & -17.2\nostar          & 0.3033 & 0.465 & 22.2\nostar          & 32.6 & 0.1811 & 27.6\nostar          & 36.7 & 0.1474 & 50.7\nostar          & 60.3 & 0.1223 \\
			\bottomrule
		\end{tabular}
		
		\vspace{0.4em}
		\begin{minipage}{0.94\textwidth}
			\footnotesize
			\emph{Note:} Rank $\alpha$ is the annualized alpha of the rank-area portfolio and equals the signed area up to whole-stock boundary approximation error; $R^2_Y$ is its regression $R^2$. The nonlinear functionals and $q_{95}$ values are annualized basis points. $\sqrt{ISE}$ is reported in the units of the alpha curve; the monotone transformation leaves p-values unchanged. Stars are assigned from each statistic's own p-value: $^{\ast\ast\ast}$, $^{\ast\ast}$, and $^{\ast}$ denote significance at the 1\%, 5\%, and 10\% levels. Tests use Newey--West standard errors with six monthly lags. \Cref{tab:op-axis-main}--\Cref{tab:mom-axis-main} use the same definitions.
		\end{minipage}
	\end{table}
	
	Panel A of \Cref{fig:axis-fingerprint} and \Cref{tab:value-axis-main} show a full-sample sign reversal. CAPM leaves an insignificant positive rank-area alpha of $+24.3$ bp, whereas FF3, Carhart, FF5, and FF6 leave significantly negative rank alphas near $-28$ bp and reject every nonlinear functional. In the full sample, HML-based models therefore push the value curve through zero rather than flattening it. Their high $R^2_Y$ values of 0.743--0.760 show that covariance fit and elimination of mean pricing errors are distinct requirements. \Cref{subsec:subperiod} shows that this negative destination is prevalent but not constant, averaging periods of near-zero correction, pronounced late-1990s overshooting, and renewed negative distortion near the sample end.
	
	q5 also lies below zero but is not rejected against its wider threshold. Its $IAE$ of 22.2 bp exceeds the 19.0 bp threshold at which FF5 is rejected, so its non-rejection provides weaker evidence of flatness than the tightly estimated FF5 and FF6 results; CAPM's non-rejection is weaker still given its 47.8 bp threshold. The rolling evidence reinforces this distinction: q5 frequently shares the negative direction of the Fama--French estimates but rejects much less often because its sampling uncertainty and functional null bands are wider. These comparisons do not attribute the negative curve to HML alone.\footnote{Replacing HML with a univariate value spread formed from the same book-to-market ordering also leaves the curve significantly negative. Within FF5, removing HML makes the value rank alpha more negative, from roughly $-28$ to $-40.6$ bp; see \Cref{subsec:construction_sensitivity}.}
	
	
	\subsection{Operating Profitability}
	\label{subsec:op_results}
	
	The operating-profitability axis sorts stocks each June from high to low by
	\[
	x^{OP}_{i,Y}
	=
	\frac{
		REVT_{i,Y-1}
		-
		COGS_{i,Y-1}
		-
		XSGA_{i,Y-1}
		-
		XINT_{i,Y-1}
	}{
		BE_{i,Y-1}
	},
	\]
	and holds the order from July through the following June. Revenue must be observed and book equity positive; missing COGS, XSGA, and XINT are set to zero. The valid universe covers 81.3\% of investible capitalization on average and 87.9\% at the final formation date. FF3 and Carhart fail the aggregate gate.
	
	\begin{table}[!htbp]
		\centering
		\singlespacing
		\caption{OP-axis bridge-alpha diagnostics: monthly baseline}
		\label{tab:op-axis-main}
		\footnotesize
		\setlength{\tabcolsep}{4.0pt}
		\begin{tabular}{l r@{}l rr r@{}l rr r@{}l rr r@{}l rr}
			\toprule
			& \multicolumn{4}{c}{Rank area}
			& \multicolumn{4}{c}{$IAE$}
			& \multicolumn{4}{c}{$\sqrt{ISE}$}
			& \multicolumn{4}{c}{$SUP$} \\
			\cmidrule(lr){2-5}\cmidrule(lr){6-9}\cmidrule(lr){10-13}\cmidrule(lr){14-17}
			Model
			& \multicolumn{2}{c}{Rank $\alpha$} & $p$ & $R^2_Y$
			& \multicolumn{2}{c}{Stat} & $q_{95}$ & $p$
			& \multicolumn{2}{c}{Stat} & $q_{95}$ & $p$
			& \multicolumn{2}{c}{Stat} & $q_{95}$ & $p$ \\
			\midrule
			CAPM    &  32.0 & $^{*}$   & 0.0520 & 0.007 & 31.9 & $^{*}$  & 32.5 & 0.0554 & 35.1 & $^{*}$  & 36.2 & 0.0575 & 51.1 &        & 59.2 & 0.1017 \\
			FF3     &  29.5 & $^{**}$  & 0.0495 & 0.162 & 30.1 & $^{**}$ & 29.4 & 0.0449 & 34.4 & $^{**}$ & 33.0 & 0.0405 & 52.6 & $^{*}$ & 55.3 & 0.0677 \\
			Carhart &  26.2 & $^{*}$   & 0.0518 & 0.165 & 26.8 & $^{**}$ & 26.7 & 0.0493 & 29.7 & $^{*}$  & 30.2 & 0.0537 & 45.3 &        & 52.0 & 0.1068 \\
			FF5     &  -5.0 &          & 0.5951 & 0.529 &  6.1 &         & 19.5 & 0.8029 &  7.5 &         & 23.2 & 0.8087 & 19.4 &        & 45.1 & 0.7535 \\
			FF6     &  -6.6 &          & 0.5061 & 0.530 &  7.0 &         & 20.1 & 0.6986 &  8.6 &         & 24.1 & 0.7119 & 18.8 &        & 46.7 & 0.7949 \\
			q5      & -18.6 &          & 0.2091 & 0.296 & 19.5 &         & 29.3 & 0.2049 & 22.2 &         & 33.6 & 0.2215 & 36.6 &        & 59.0 & 0.3345 \\
			\bottomrule
		\end{tabular}
		
		\vspace{0.4em}
		\begin{minipage}{0.94\textwidth}
			\footnotesize
			\emph{Note:} Columns are defined as in \Cref{tab:value-axis-main}. This is the only panel in which the nonlinear functionals disagree at 5\%: Carhart's $IAE$ is rejected while its $\sqrt{ISE}$ and $SUP$ are not.
		\end{minipage}
	\end{table}
	
	CAPM, FF3, and Carhart leave positive OP-axis curves near the 5\% boundary. This is the only axis on which the functional choice changes the 5\% verdict: FF3 is rejected by rank area, $IAE$, and $\sqrt{ISE}$ but not by $SUP$, while Carhart is rejected only by $IAE$. Because FF3 and Carhart fail the aggregate gate, these are conditional bridge-subspace statements.
	
	FF5 and FF6 instead leave rank-area alphas of $-5.0$ and $-6.6$ bp and $IAE$ values of 6.1 and 7.0 bp, far below relatively tight thresholds. Their small point estimates and $R^2_Y$ near 0.53 provide the strongest evidence of flat OP curves. q5 lies farther below zero, with rank alpha $-18.6$ bp and $IAE=19.5$ bp, but remains below its wider 29.3 bp threshold. Its non-rejection is therefore less informative than those of FF5 and FF6.
	
	
	\subsection{Investment}
	\label{subsec:inv_results}
	
	The investment axis computes
	\[
	INV_{i,Y}
	=
	\frac{AT_{i,Y-1}-AT_{i,Y-2}}{AT_{i,Y-2}},
	\]
	and sorts on $-INV$, from conservative to aggressive investment. The construction requires positive total assets in both fiscal years and an exact one-fiscal-year lag. The valid universe covers 79.7\% of investible capitalization on average and 90.8\% at the final formation date. Carhart fails the aggregate gate, while FF3 is borderline at the 10\% level.
	
	\begin{table}[!htbp]
		\centering
		\singlespacing
		\caption{Investment-axis bridge-alpha diagnostics: monthly baseline}
		\label{tab:inv-axis-main}
		\footnotesize
		\setlength{\tabcolsep}{4.0pt}
		\begin{tabular}{l r@{}l rr r@{}l rr r@{}l rr r@{}l rr}
			\toprule
			& \multicolumn{4}{c}{Rank area}
			& \multicolumn{4}{c}{$IAE$}
			& \multicolumn{4}{c}{$\sqrt{ISE}$}
			& \multicolumn{4}{c}{$SUP$} \\
			\cmidrule(lr){2-5}\cmidrule(lr){6-9}\cmidrule(lr){10-13}\cmidrule(lr){14-17}
			Model
			& \multicolumn{2}{c}{Rank $\alpha$} & $p$ & $R^2_Y$
			& \multicolumn{2}{c}{Stat} & $q_{95}$ & $p$
			& \multicolumn{2}{c}{Stat} & $q_{95}$ & $p$
			& \multicolumn{2}{c}{Stat} & $q_{95}$ & $p$ \\
			\midrule
			CAPM    &  50.0 & $^{***}$ & 0.0028 & 0.156 & 49.7 & $^{***}$ & 32.7 & 0.0025 & 53.4 & $^{***}$ & 37.4 & 0.0041 & 74.9 & $^{**}$ & 62.2 & 0.0161 \\
			FF3     &  23.5 & $^{**}$  & 0.0261 & 0.489 & 23.9 & $^{**}$  & 20.9 & 0.0239 & 25.1 & $^{**}$  & 24.3 & 0.0423 & 33.4 &         & 43.7 & 0.1957 \\
			Carhart &  14.8 &          & 0.1613 & 0.502 & 15.2 &          & 20.9 & 0.1636 & 16.0 &          & 24.3 & 0.2347 & 25.2 &         & 44.2 & 0.4747 \\
			FF5     & -12.5 &          & 0.1185 & 0.761 & 14.4 &          & 16.7 & 0.1027 & 17.5 & $^{*}$   & 19.8 & 0.0945 & 36.5 & $^{*}$  & 38.4 & 0.0698 \\
			FF6     & -15.3 & $^{*}$   & 0.0876 & 0.763 & 16.2 & $^{*}$   & 18.1 & 0.0861 & 20.3 & $^{*}$   & 21.6 & 0.0704 & 41.2 & $^{*}$  & 42.0 & 0.0567 \\
			q5      & -19.7 & $^{*}$   & 0.0976 & 0.683 & 20.7 & $^{*}$   & 23.7 & 0.0909 & 24.3 & $^{*}$   & 28.1 & 0.0962 & 47.6 & $^{*}$  & 51.9 & 0.0813 \\
			\bottomrule
		\end{tabular}
		
		\vspace{0.4em}
		\begin{minipage}{0.94\textwidth}
			\footnotesize
			\emph{Note:} Columns are defined as in \Cref{tab:value-axis-main}. \Cref{subsec:robust_block_bootstrap} reports the corresponding block-bootstrap results.
		\end{minipage}
	\end{table}
	
	Investment exhibits the largest directional shift. CAPM leaves a significantly positive rank-area alpha of $+50.0$ bp, FF3 reduces it to $+23.5$ bp but remains rejected, and Carhart leaves an insignificant $+14.8$ bp. FF5, FF6, and q5 instead cross zero, with rank alphas of $-12.5$, $-15.3$, and $-19.7$ bp---a movement of about 70 bp from CAPM to q5.
	
	Despite high $R^2_Y$, none of the investment-factor models rejects at 5\% under HAC-GP, although most statistics lie between the 5\% and 10\% levels. The evidence therefore indicates a clear negative shift but only borderline support that the destination differs from zero. The moving-block bootstrap rejects selected FF6 and q5 functionals in \Cref{subsec:robust_block_bootstrap}.
	
	
	\subsection{Momentum}
	\label{subsec:mom_results}
	
	At each month-end, the momentum axis sorts stocks from winners to losers using
	\[
	x^{MOM}_{i,m}
	=
	\prod_{s=2}^{12}\left(1+R_{i,m-s}\right)-1,
	\]
	and holds the order during the following month. I require at least eight valid returns in the 11-month formation window. The valid universe covers 96.8\% of investible capitalization on average and 99.5\% at the final formation date. All aggregate gates pass.
	
	\begin{table}[!htbp]
		\centering
		\singlespacing
		\caption{Momentum-axis bridge-alpha diagnostics: monthly baseline}
		\label{tab:mom-axis-main}
		\footnotesize
		\setlength{\tabcolsep}{4.0pt}
		\begin{tabular}{l r@{}l rr r@{}l rr r@{}l rr r@{}l rr}
			\toprule
			& \multicolumn{4}{c}{Rank area}
			& \multicolumn{4}{c}{$IAE$}
			& \multicolumn{4}{c}{$\sqrt{ISE}$}
			& \multicolumn{4}{c}{$SUP$} \\
			\cmidrule(lr){2-5}\cmidrule(lr){6-9}\cmidrule(lr){10-13}\cmidrule(lr){14-17}
			Model
			& \multicolumn{2}{c}{Rank $\alpha$} & $p$ & $R^2_Y$
			& \multicolumn{2}{c}{Stat} & $q_{95}$ & $p$
			& \multicolumn{2}{c}{Stat} & $q_{95}$ & $p$
			& \multicolumn{2}{c}{Stat} & $q_{95}$ & $p$ \\
			\midrule
			CAPM    & 72.2 & $^{***}$ & 0.0014 & 0.000 & 72.1 & $^{***}$ & 44.0 & 0.0016 &  74.9 & $^{***}$ & 48.4 & 0.0026 &  93.8 & $^{***}$ & 73.0 & 0.0093 \\
			FF3     & 92.6 & $^{***}$ & 0.0000 & 0.092 & 93.0 & $^{***}$ & 43.2 & 0.0000$^{\dagger}$ &  97.1 & $^{***}$ & 47.4 & 0.0000$^{\dagger}$ & 122.0 & $^{***}$ & 71.2 & 0.0003 \\
			Carhart & -7.1 &          & 0.5374 & 0.794 & 15.0 &          & 23.0 & 0.2371 &  17.0 &          & 26.5 & 0.2598 &  32.5 &          & 46.3 & 0.2556 \\
			FF5     & 91.2 & $^{***}$ & 0.0012 & 0.101 & 91.6 & $^{***}$ & 55.2 & 0.0017 &  96.7 & $^{***}$ & 60.0 & 0.0020 & 128.5 & $^{***}$ & 87.0 & 0.0031 \\
			FF6     &  7.0 &          & 0.5536 & 0.805 & 16.1 &          & 23.4 & 0.1929 &  19.6 &          & 26.8 & 0.1739 &  41.2 & $^{*}$   & 46.2 & 0.0949 \\
			q5      &  3.3 &          & 0.9028 & 0.134 & 17.8 &          & 52.4 & 0.5315 &  20.7 &          & 57.4 & 0.5254 &  41.0 &          & 85.4 & 0.4561 \\
			\bottomrule
		\end{tabular}
		
		\vspace{0.4em}
		\begin{minipage}{0.94\textwidth}
			\footnotesize
			\emph{Note:} Columns are defined as in \Cref{tab:value-axis-main}. $^{\dagger}$~No simulated draw exceeded the observed statistic, so the p-value attains its lower bound $1/(B+1)=2\times10^{-5}$ rather than zero. The FF3 rank-area p-value is $3.3\times10^{-5}$.
		\end{minipage}
	\end{table}
	
	Momentum shows the sharpest separation by counterpart-factor content. CAPM, FF3, and FF5 leave large positive curves and reject every functional. Adding UMD moves the rank alpha from $+92.6$ to $-7.1$ bp between FF3 and Carhart and from $+91.2$ to $+7.0$ bp between FF5 and FF6; neither UMD model rejects at 5\%. Unlike HML on value and CMA or IA on investment, UMD supplies a large correction without significant opposite-signed overshoot.
	
	q5 also leaves a small curve despite containing no UMD: its rank alpha is $+3.3$ bp and its $IAE$ is 17.8 bp. Its low $R^2_Y$ and 52.4 bp threshold make the p-value alone weak evidence, but the point estimate lies below the tighter thresholds attained by Carhart and FF6. Its non-rejection is therefore not solely an artifact of its own loose null.
	
	Because all gates pass and UMD removes approximately 84--100 bp of signed distortion without significant overshoot, momentum serves as a positive control: the diagnostic recovers known factor content and can produce a nearly flat curve when the correction aligns with the coordinate. \Cref{subsec:cross_axis_factor_scan} isolates single-factor contributions, and \Cref{subsec:construction_sensitivity} tests whether the destinations depend on traded implementation.
	
	\subsection{Time variation in value-axis correction}
	\label{subsec:subperiod}
	
	The full-sample statistics establish the economic magnitude of the pricing errors but not whether their destination is stable through time. Because value provides the paper's clearest case of correction passing beyond zero, I examine it as a focused setting rather than claim a common chronology across all axes. I repeat the monthly value-axis analysis over calendar subperiods and 49 complete rolling 120-month windows, re-estimating the aggregate gate and model-specific inference within each sample. Failed-gate cells are reported descriptively, and the overlapping rolling windows trace persistent episodes rather than independent repetitions of the test.
	
	\Cref{tab:value-subperiod-decades} reports four models with distinct roles: CAPM is the uncorrected benchmark, FF3 isolates the introduction of HML, FF5 represents the headline overshoot, and q5 provides a non-HML comparison with different estimation precision. The final row covers only January 2020 through December 2024 and is therefore a partial-period diagnostic rather than a complete decade.
	
	\begin{table}[!htbp]
		\centering
		\singlespacing
		\caption{Value-axis correction across calendar subperiods}
		\label{tab:value-subperiod-decades}
		\footnotesize
		\setlength{\tabcolsep}{3.8pt}
		\begin{tabular}{lrrrrrrrr}
			\toprule
			& \multicolumn{2}{c}{CAPM}
			& \multicolumn{2}{c}{FF3}
			& \multicolumn{2}{c}{FF5}
			& \multicolumn{2}{c}{q5} \\
			\cmidrule(lr){2-3}
			\cmidrule(lr){4-5}
			\cmidrule(lr){6-7}
			\cmidrule(lr){8-9}
			Period
			& Rank $\alpha$ & $p(IAE)$
			& Rank $\alpha$ & $p(IAE)$
			& Rank $\alpha$ & $p(IAE)$
			& Rank $\alpha$ & $p(IAE)$ \\
			\midrule
			1970--1979
			& $\mathbf{+104.5}$ & $\mathbf{0.030}$
			& $-5.8$           & $0.894$
			& $-3.9^{\ddagger}$& $0.903$
			& $+22.5$          & $0.786$ \\
			
			1980--1989
			& $\mathbf{+89.6}$ & $\mathbf{0.003}$
			& $-14.0$          & $0.517$
			& $-18.5$          & $0.427$
			& $-2.0$           & $0.615$ \\
			
			1990--1999
			& $-46.5$          & $0.392$
			& $\mathbf{-85.2}$ & $\mathbf{<0.001}$
			& $\mathbf{-77.5}$ & $\mathbf{<0.001}$
			& $-53.0$          & $0.195$ \\
			
			2000--2009
			& $+116.6$         & $0.058$
			& $+3.8$           & $0.797$
			& $-12.6$          & $0.588$
			& $+19.8$          & $0.703$ \\
			
			2010--2019
			& $-55.7$          & $0.074$
			& $-16.4$          & $0.281$
			& $-17.9$          & $0.145$
			& $-19.5$          & $0.393$ \\
			
			2020--2024
			& $-130.2^{\ddagger}$ & $0.302$
			& $-107.7^{\ddagger}$ & $<0.001$
			& $\mathbf{-94.1}$    & $\mathbf{<0.001}$
			& $-96.1$             & $0.053$ \\
			\bottomrule
		\end{tabular}
		
		\vspace{0.4em}
		\begin{minipage}{0.96\textwidth}
			\footnotesize
			\emph{Note:} Rank $\alpha$ is the annualized signed area of the value-axis bridge-alpha curve, in basis points. $p(IAE)$ is the model-specific HAC-GP p-value for the integrated absolute error, based on 4,999 simulated draws. Bold entries denote 5\% functional rejections in cells that pass the aggregate gate. $^{\ddagger}$~The characteristic-valid aggregate rejects at 5\%, so the corresponding axis statistics retain only the conditional interpretation in \Cref{rem:gate_failure}. The 2020--2024 row contains 60 monthly observations and is not a complete decade.
		\end{minipage}
	\end{table}
	
	The calendar results reveal three correction states. In the 1970s and 1980s, CAPM leaves large positive distortions while the HML-based models move the destination close to zero. In the 1990s, FF3 and FF5 overshoot sharply, with rank-area alphas of $-85.2$ and $-77.5$ bp and strong zero-curve rejections despite passing the aggregate gate. The negative destination attenuates in the 2000s and 2010s, then reappears in 2020--2024: FF5 remains aggregate-valid with a rank alpha of $-94.1$ bp, whereas q5 produces a similar estimate of $-96.1$ bp but an $IAE$ p-value of 0.053. Because this final period contains only 60 observations, complete rolling windows provide the cleaner check on the recent shift.
	
	Calendar bins can conceal changes within decades. \Cref{fig:value-rolling-subperiod} therefore reports all complete rolling 10-year windows from 1967--1976 through 2015--2024. Panel A plots rank-area alphas for CAPM, FF3, FF5, and q5; Panel B compares the HAC standard errors of the FF5 and q5 estimates.
	
	\begin{figure}[H]
		\centering
		
		\begin{minipage}[t]{0.48\textwidth}
			\centering
			\includegraphics[width=\linewidth]{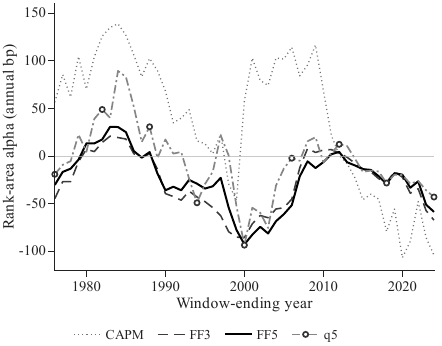}
			\caption*{A. Rolling rank-area alpha}
		\end{minipage}
		\hfill
		\begin{minipage}[t]{0.48\textwidth}
			\centering
			\includegraphics[width=\linewidth]{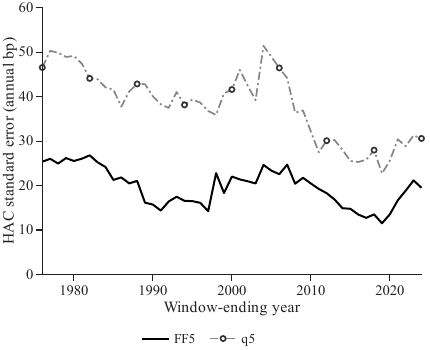}
			\caption*{B. Rolling HAC standard error}
		\end{minipage}
		
		\caption{Rolling ten-year value-axis destinations and estimation precision}
		\label{fig:value-rolling-subperiod}
		
		\vspace{0.3em}
		\begin{minipage}{0.96\linewidth}
			\footnotesize
			\emph{Note:} Panel A reports annualized rank-area alphas for CAPM, FF3, FF5, and q5 over complete 120-month windows, indexed by the ending year. Panel B reports the corresponding HAC standard errors for FF5 and q5. All point estimates are displayed to preserve the time-series chronology; a window that fails the aggregate gate is not treated as unconditional evidence. FF6 is omitted for readability because its rolling rank-area alpha closely tracks FF5, with a correlation of 0.977. Adjacent windows overlap and are not statistically independent. \Cref{app:value_subperiod_full} reports rolling-window summary statistics and selected turning points; the complete machine-readable model-by-window output is included in the replication archive.
		\end{minipage}
	\end{figure}
	
	The rolling evidence shows that the FF5 overshoot is prevalent but not constant. Its rank-area alpha is negative in 39 of 49 windows and in 34 of the 44 windows that pass the aggregate gate. The point estimate nevertheless rises to $+30.7$ bp in 1974--1983 before turning negative again; this positive estimate is insignificant and therefore indicates a shifted destination, not a significant opposite-signed failure. FF5 reaches its trough of $-92.7$ bp in 1991--2000, with $p(IAE)<0.001$, subsequently approaches zero, and becomes negative again near the end of the sample. The complete 2012--2021 and 2015--2024 windows pass the aggregate gate and yield rank alphas of $-33.3$ and $-58.6$ bp, with $p(IAE)=0.046$ and 0.002. The recent evidence is thus a late-sample re-emergence of negative distortion, not a structural break identified at a precise date.
	
	The q5 comparison separates estimated distortion from statistical resolution. Among aggregate-valid windows, q5 produces a negative rank-area estimate in 28 of 47 cases but rejects the $IAE$ null in only three, compared with 34 negative estimates and 16 rejections among 44 valid FF5 windows. Its median rolling HAC standard error is 39.2 bp, versus 20.5 bp for FF5, and its median $IAE$ critical value is 76.3 bp, versus 41.4 bp. q5 therefore often moves in the same negative direction as the Fama--French models, but its wider null distribution produces much weaker rejection evidence; non-rejection does not imply a flatter estimated curve.
	
	The chronology also does not mechanically track Compustat coverage. Average capitalization coverage rises smoothly from 73.8\% in 1974--1983 to 82.2\% in 1991--2000 and 89.2\% in 2015--2024, while the estimated destination moves from positive to strongly negative, back toward zero, and then negative again. Lower early-sample coverage remains a scope qualification, but neither it nor aggregate mismatch explains the sign reversal.
	
	Overall, the full-sample value overshoot averages economically distinct correction states. HML-based models bring the positive CAPM distortion close to zero in some periods, cross into a broad negative curve in others, and display a renewed negative episode near the end of the sample. The axis diagnostic therefore identifies not only whether correction overshoots on average, but when its destination moves toward and across zero. The next subsection returns to the full-sample model-by-axis pricing fingerprint.
	
	\FloatBarrier
	
	\subsection{Economic magnitude under unit gross exposure}
	\label{subsec:economic_magnitude}
	
	The rank-area alpha has a natural portfolio interpretation, but its raw magnitude reflects the exposure embedded in the rank-weight function. From \eqref{eq:axis_rank_area}, the continuous rank-area payoff uses weight density
	\[
	h(u)=\frac12-u,
	\qquad
	u\in[0,1].
	\]
	Its net exposure is zero, while gross exposure equals
	\begin{equation}
		G=\int_0^1\left|\frac12-u\right|du=2\int_0^{1/2}\left(\frac12-u\right)du=\frac14.
		\label{eq:rank_area_continuous_gross}
	\end{equation}
	The portfolio therefore contains one-eighth long and one-eighth short exposure, so expressing the same payoff at unit gross multiplies it by four.
	
	The finite-stock implementation is nearly identical. Using the midpoint ranks in \eqref{eq:finite_axis_midpoint} and the rank-area payoff in \eqref{eq:finite_axis_bridge}, define daily gross exposure as
	\begin{equation}
		G^x_t=\sum_{i=1}^{N^x_t}w^x_{i,t}\left|\frac12-\mu^x_{i,t}\right|.
		\label{eq:finite_rank_area_gross}
	\end{equation}
	This is the midpoint sum for \eqref{eq:rank_area_continuous_gross}. Because $\left|\frac12-u\right|$ is linear except at $u=\frac12$, the approximation is exact on every stock interval except, if present, the one straddling the midpoint. Hence
	\begin{equation}
		G^x_t=\frac14-\delta^x_t,
		\qquad
		0\leq\delta^x_t\leq\frac14\left(w^x_{k_t,t}\right)^2,
		\label{eq:finite_rank_area_gross_error}
	\end{equation}
	where $k_t$ is the stock whose cap-mass interval contains $u=\frac12$. Whole-stock discreteness therefore produces only a second-order correction governed by the midpoint stock's weight.
	
	I normalize the daily rank-area payoff to unit gross,
	\begin{equation}
		Y^{x,UG}_t=\frac{Y^x_t}{G^x_t},
		\label{eq:unit_gross_rank_area}
	\end{equation}
	aggregate it to months as in \eqref{eq:monthly_bridge_aggregation}, and re-estimate the factor-model regression. Exact normalization differs slightly from multiplying the estimated alpha by four because $G^x_t$ varies over time, but the variation is negligible:
	\begin{equation}
		\alpha^{x,UG}_m\approx4\alpha^x_m(Y^x)=4SA^x_m.
		\label{eq:unit_gross_rule_of_four}
	\end{equation}
	
	\Cref{tab:unit-gross-magnitude} reports FF6 as the representative specification because it passes all four aggregate gates and contains the canonical counterpart factor for every tested axis. The table compares the raw rank-area alpha with the alpha obtained by re-estimating the daily unit-gross payoff in \eqref{eq:unit_gross_rank_area}.
	
	\begin{table}[!htbp]
		\centering
		\singlespacing
		\caption{Economic magnitude of FF6 rank-area alphas at unit gross exposure}
		\label{tab:unit-gross-magnitude}
		\footnotesize
		\setlength{\tabcolsep}{6pt}
		\begin{tabular}{lrrrr}
			\toprule
			Axis
			& Mean gross
			& Raw rank $\alpha$
			& Unit-gross $\alpha$
			& Alpha multiple \\
			\midrule
			Value
			& 0.249993
			& $-28.1$
			& $-112.5$
			& 4.000 \\
			Operating profitability
			& 0.249990
			& $-6.6$
			& $-26.5$
			& 4.007 \\
			Investment
			& 0.249963
			& $-15.3$
			& $-61.2$
			& 4.005 \\
			Momentum
			& 0.249981
			& $+7.0$
			& $+28.1$
			& 4.005 \\
			\bottomrule
		\end{tabular}
		
		\vspace{0.4em}
		\begin{minipage}{0.94\textwidth}
			\footnotesize
			\emph{Note:} Alphas are annualized basis points. Mean gross is the time-series mean of $G^x_t$ in \eqref{eq:finite_rank_area_gross}. Unit-gross alphas are obtained by normalizing the daily rank-area payoff by $G^x_t$, aggregating to months, and re-estimating the FF6 regression. Alpha multiple is the unit-gross alpha divided by the raw rank-area alpha. FF6 is reported because it passes every aggregate gate and includes HML, RMW, CMA, and UMD.
		\end{minipage}
	\end{table}
	
	Mean daily gross exposure ranges only from 0.249963 to 0.249993, and the exact alpha multiple ranges from 4.000 to 4.007. On value, unit-gross normalization raises the magnitude of the full-sample FF6 rank alpha from $-28.1$ to $-112.5$ bp per year, implying an economically nontrivial pricing error of approximately 1.1\% per year at 100\% gross exposure. This is a full-sample scale rather than a time-invariant annual error; \Cref{subsec:subperiod} shows that the value destination varies substantially across subperiods.
	
	The remaining axes obey the same scaling. Exact unit-gross normalization moves the operating-profitability alpha from $-6.61$ to $-26.49$ bp, the investment alpha from $-15.27$ to $-61.17$ bp, and the momentum alpha from $+7.03$ to $+28.14$ bp. The largest absolute difference between exact normalization and four times the raw alpha is only 0.07 bp per year. Multiplying the reported rank-area alpha by four is therefore an accurate rule of thumb for economic magnitude. The unit-gross estimates provide scale for the signed-area statistic; they do not replace the raw rank-area alpha, which retains the exact integral interpretation in \eqref{eq:axis_rank_area_alpha}.
	
	\subsection{Cross-axis comparison}
	\label{subsec:summary_results}
	
	\Cref{tab:monthly-axis-fingerprint} summarizes the full-sample monthly rank-area alpha and $IAE$ for the six benchmark models. No model dominates globally: the sign, magnitude, precision, and rejection status of the pricing-error curve vary by coordinate, making the model-by-axis fingerprint the central empirical object.
	
	\begin{table}[!htbp]
		\centering
		\singlespacing
		\caption{Full-sample monthly characteristic-axis fingerprint}
		\label{tab:monthly-axis-fingerprint}
		\footnotesize
		\setlength{\tabcolsep}{4.0pt}
		\begin{tabular}{lrrrrrrrr}
			\toprule
			& \multicolumn{2}{c}{Value}
			& \multicolumn{2}{c}{OP}
			& \multicolumn{2}{c}{Investment}
			& \multicolumn{2}{c}{Momentum} \\
			\cmidrule(lr){2-3}
			\cmidrule(lr){4-5}
			\cmidrule(lr){6-7}
			\cmidrule(lr){8-9}
			Model
			& Rank $\alpha$ & $IAE$
			& Rank $\alpha$ & $IAE$
			& Rank $\alpha$ & $IAE$
			& Rank $\alpha$ & $IAE$ \\
			\midrule
			CAPM
			& $24.3$          & $24.1$
			& $32.0^{*}$      & $31.9^{*}$
			& $50.0^{***}$    & $49.7^{***}$
			& $72.2^{***}$    & $72.1^{***}$ \\
			
			FF3
			& $-27.8^{***}$   & $28.2^{***}$
			& $29.5^{**}$     & $30.1^{**}$
			& $23.5^{**}$     & $23.9^{**}$
			& $92.6^{***}$    & $93.0^{***\dagger}$ \\
			
			Carhart
			& $-28.0^{***}$   & $28.3^{***}$
			& $26.2^{*}$      & $26.8^{**}$
			& $14.8$          & $15.2$
			& $-7.1$          & $15.0$ \\
			
			FF5
			& $-28.6^{***}$   & $28.8^{***}$
			& $-5.0$          & $6.1$
			& $-12.5$         & $14.4^{\ddagger}$
			& $91.2^{***}$    & $91.6^{***}$ \\
			
			FF6
			& $-28.1^{***}$   & $28.3^{***}$
			& $-6.6$          & $7.0$
			& $-15.3^{*}$     & $16.2^{*}$
			& $7.0$           & $16.1^{\ddagger}$ \\
			
			q5
			& $-17.2$         & $22.2$
			& $-18.6$         & $19.5$
			& $-19.7^{*}$     & $20.7^{*}$
			& $3.3$           & $17.8$ \\
			\bottomrule
		\end{tabular}
		
		\vspace{0.4em}
		\begin{minipage}{0.95\textwidth}
			\footnotesize
			\emph{Note:} Rank $\alpha$ is the annualized signed area and $IAE$ the annualized integrated absolute value of the full-sample monthly bridge-alpha curve, both in basis points. Stars are assigned from their respective HAC and HAC-GP p-values: $^{***}$, $^{**}$, and $^{*}$ denote significance at the 1\%, 5\%, and 10\% levels. $^{\dagger}$~No simulated draw exceeded the observed statistic, so the p-value attains its lower bound $1/(B+1)=2\times10^{-5}$. $^{\ddagger}$~Other nonlinear functionals reject at 10\% although $IAE$ does not: $\sqrt{ISE}$ and $SUP$ for FF5 on investment, and $SUP$ for FF6 on momentum. Axis-specific tables report all statistics, critical values, p-values, and $R^2_Y$.
		\end{minipage}
	\end{table}
	
	Across the full-sample model--axis cells, models without the relevant counterpart factor generally leave positive curves, while counterpart-factor content shifts them downward. Where the correction lands differs by coordinate. HML-based models cross into significantly negative value errors; investment-factor models also cross zero, but with only marginal monthly evidence. FF5 and FF6 flatten operating profitability, while Carhart and FF6 flatten momentum. Factor content therefore predicts the direction of correction more reliably than whether the resulting curve is tightly centered at zero.
	
	No benchmark model dominates. FF5 and FF6 produce the flattest profitability curves but are rejected on value, and FF5 leaves a large momentum distortion without UMD. Carhart is comparatively close to zero on investment and momentum but is rejected on value and remains marginal on profitability. In the full-sample monthly tests, q5 is not rejected at 5\% on any axis under HAC-GP, although its value and investment distortions remain economically nontrivial and the moving-block bootstrap rejects selected investment functionals. High covariance fit is likewise insufficient: HML-based models are rejected on value despite high $R^2_Y$, whereas q5 passes momentum with $R^2_Y=0.134$.
	
	The strength of non-rejection depends on model-specific resolution. On momentum, q5's $IAE$ of 17.8 bp lies below FF6's tighter 23.4 bp critical value, so its small estimate is not merely an artifact of a loose null. On profitability, q5's 19.5 bp distortion approximately equals FF5's 19.5 bp threshold. On value and investment, however, q5's $IAE$ values of 22.2 and 20.7 bp exceed the tighter thresholds faced by FF5 and FF6---19.0 bp on value and 16.7--18.1 bp on investment. These descriptive comparisons do not transplant one model's null distribution to another, but they show that q5's value and investment non-rejections provide weaker evidence of flatness. The rolling value results sharpen the point: q5 frequently shares the negative direction of the Fama--French estimates but rejects far less often because its sampling uncertainty and functional null bands are wider. CAPM value is the clearest low-resolution case, with an $IAE$ of 24.1 bp tested against a 47.8 bp threshold.
	
	The aggregate gate leaves the main unconditional findings intact. The FF5 and FF6 value rejections and the positive CAPM investment and momentum rejections occur in clean-gate rows, while the clean-gate CAPM profitability result is marginal. FF3 and Carhart retain the conditional interpretation in \Cref{rem:gate_failure} on the accounting axes. The next subsection isolates individual factor contributions by adding each candidate separately to CAPM.
	
	\FloatBarrier
	
	\subsection{Cross-axis factor scan}
	\label{subsec:cross_axis_factor_scan}
	
	I add each of 155 monthly candidate factors $g$ separately to CAPM and re-estimate all four bridge-alpha curves. For each specification I record the annualized maximum-Sharpe gain $\Delta SR$, rank-area alpha, $R^2_Y$, $IAE$, and the change from the axis-specific CAPM baseline. The exercise separates frontier expansion from pricing a predetermined characteristic coordinate.
	
	\begin{table}[H]
		\centering
		\singlespacing
		\caption{Selected CAPM-plus-one-factor results across characteristic axes}
		\label{tab:cross-axis-factor-scan}
		\footnotesize
		\setlength{\tabcolsep}{3.6pt}
		\begin{tabular}{llrrrrrr}
			\toprule
			Axis
			& Factor
			& $\Delta SR$
			& Rank alpha
			& $t$
			& $R^2_Y$
			& $IAE$
			& $\Delta IAE$ \\
			\midrule
			Value
			& FF HML
			& 0.164 & -26.1 & -2.36 & 0.715 & 26.7 & +2.6 \\
			&
			GFD BE/ME
			& 0.020 & 8.6 & 0.74 & 0.667 & 11.3 & -12.8 \\
			&
			q5 q\_EG
			& 1.240 & 63.9 & 3.03 & 0.057 & 61.8 & +37.7 \\
			\addlinespace
			
			Operating profitability
			& FF RMW
			& 0.263 & -9.4 & -1.01 & 0.507 & 9.8 & -22.0 \\
			&
			GFD OPE/BE
			& 0.178 & 4.9 & 0.41 & 0.349 & 5.2 & -26.7 \\
			&
			q5 q\_ROE
			& 0.496 & -12.3 & -0.86 & 0.261 & 12.8 & -19.0 \\
			&
			q5 q\_EG
			& 1.240 & -22.7 & -1.35 & 0.108 & 25.3 & -6.6 \\
			\addlinespace
			
			Investment
			& FF CMA
			& 0.345 & -5.8 & -0.71 & 0.745 & 9.1 & -40.6 \\
			&
			q5 q\_IA
			& 0.448 & -12.5 & -1.39 & 0.680 & 14.4 & -35.4 \\
			&
			GFD INV\_GR1
			& 0.349 & 14.9 & 1.05 & 0.384 & 15.2 & -34.6 \\
			&
			GFD NOA\_GR1A
			& 0.346 & 12.0 & 0.88 & 0.426 & 12.5 & -37.2 \\
			&
			q5 q\_EG
			& 1.240 & 41.9 & 2.65 & 0.158 & 39.8 & -9.9 \\
			\addlinespace
			
			Momentum
			& FF UMD
			& 0.272 & -13.6 & -1.08 & 0.775 & 18.6 & -53.5 \\
			&
			GFD ret 12--7
			& 0.212 & 11.0 & 0.66 & 0.533 & 11.1 & -61.0 \\
			&
			q5 q\_ROE
			& 0.496 & 30.7 & 1.09 & 0.084 & 30.3 & -41.8 \\
			&
			q5 q\_IA
			& 0.448 & 86.8 & 3.29 & 0.012 & 87.2 & +15.1 \\
			&
			q5 q\_EG
			& 1.240 & 15.2 & 0.58 & 0.042 & 20.6 & -51.5 \\
			\bottomrule
		\end{tabular}
		
		\vspace{0.4em}
		\begin{minipage}{0.96\textwidth}
			\footnotesize
			\emph{Note:} The library comprises Fama--French/Carhart, q-factor-family, and Global Factor Data factors; the reconstructions in \Cref{subsec:prime-factors} are excluded. The table reports standard counterpart factors, related alternatives, and q\_EG as a common high-$\Delta SR$ benchmark. $\Delta SR$ is relative to CAPM. Rank alpha, $IAE$, and $\Delta IAE$ are annualized basis points; $\Delta IAE$ uses unrounded values. CAPM $IAE$ equals 24.1 bp on value, 31.9 on operating profitability, 49.7 on investment, and 72.1 on momentum. Tests use Newey--West standard errors with six monthly lags.
		\end{minipage}
	\end{table}
	
	\begin{figure}[!htbp]
		\centering
		\begin{minipage}{0.48\textwidth}
			\centering
			\includegraphics[width=\linewidth]{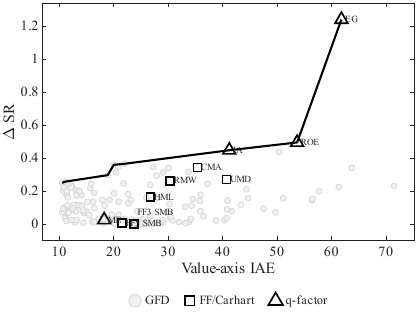}
			\caption*{A. Value}
		\end{minipage}
		\hfill
		\begin{minipage}{0.48\textwidth}
			\centering
			\includegraphics[width=\linewidth]{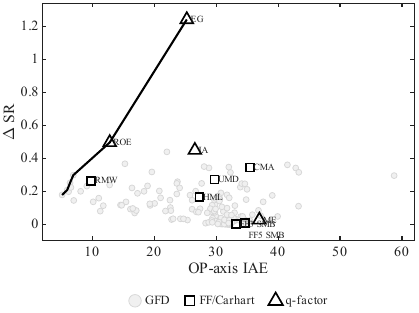}
			\caption*{B. Operating profitability}
		\end{minipage}
		
		\vspace{0.5em}
		
		\begin{minipage}{0.48\textwidth}
			\centering
			\includegraphics[width=\linewidth]{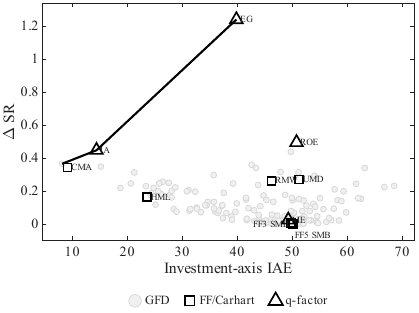}
			\caption*{C. Investment}
		\end{minipage}
		\hfill
		\begin{minipage}{0.48\textwidth}
			\centering
			\includegraphics[width=\linewidth]{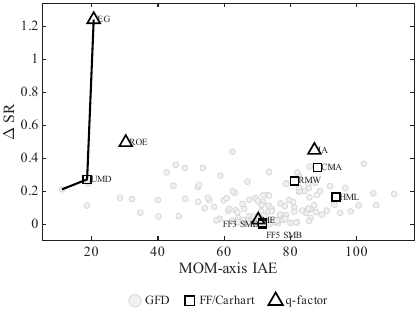}
			\caption*{D. Momentum}
		\end{minipage}
		
		\caption{Axis distortion and maximum-Sharpe gains across candidate factors}
		\label{fig:cross-axis-factor-scan}
		\begin{minipage}{0.94\linewidth}
			\footnotesize
			\emph{Note:} Each panel plots axis-specific $IAE$ against the annualized maximum-Sharpe gain $\Delta SR$ relative to CAPM for 155 factors. Circles denote Global Factor Data factors, squares Fama--French/Carhart factors, and triangles q-factor-family factors. Lines trace the upper-left frontier toward lower $IAE$ and higher $\Delta SR$.
		\end{minipage}
	\end{figure}
	
	Related factors move the same coordinate differently. On value, HML raises $R^2_Y$ to 0.715 but carries the rank alpha from $+24.3$ to $-26.1$ bp and increases $IAE$ to 26.7 bp. GFD BE/ME instead leaves a positive alpha of $+8.6$ bp and lowers $IAE$ to 11.3 bp despite a much smaller $\Delta SR$. Profitability and momentum provide corresponding flattening cases: RMW and OPE/BE reduce OP-axis $IAE$ to 9.8 and 5.2 bp, while UMD and ret 12--7 reduce momentum $IAE$ to 18.6 and 11.1 bp.
	
	Investment displays construction-dependent sign reversal. CMA and q\_IA carry the rank alpha below zero, to $-5.8$ and $-12.5$ bp, whereas INV\_GR1 and NOA\_GR1A leave positive alphas of $+14.9$ and $+12.0$ bp with similarly low $IAE$.\footnote{CAPM augmented with q\_IA alone reproduces the FF5 investment-axis point estimates in \Cref{tab:inv-axis-main} to two decimal places ($-12.53$ bp and $14.36$ bp against $-12.53$ bp and $14.37$ bp). The coincidence is numerical rather than structural: the specifications differ in their remaining factors and in precision, with $R^2_Y$ of 0.680 against 0.761.} None of these rank alphas is significant. Investment information therefore reduces distortion, but whether the curve crosses zero depends on its traded implementation.
	
	The scan also locates q5's momentum fit. q\_ME leaves momentum $IAE$ near the CAPM baseline at 70.2 bp, and q\_IA raises it to 87.2 bp. The reduction comes from q\_ROE and q\_EG, which lower $IAE$ to 30.3 and 20.6 bp. Thus q5's near-zero momentum curve is generated by profitability and expected-growth content rather than a factor formed directly on the momentum order.
	
	\FloatBarrier
	
	\subsection{Sharpe Gains and Axis Destinations}
	\label{subsec:dsr_vs_iae}
	
	Maximum-Sharpe contribution and axis destination are connected but not equivalent. By \eqref{eq:curve_shift} and \eqref{eq:sr2_decomposition}, the absolute appraisal ratio determines frontier expansion and scales the potential correction, while residual alignment and the starting curve determine its sign, shape, and destination.
	
	For the fixed full-sample CAPM benchmark, $\Delta SR$ is strictly increasing in $\lvert IR_g\rvert$. Across 155 candidate factors, its Spearman correlation with $IAE$ is $+0.045$ on value, $+0.034$ on momentum, $-0.362$ on profitability, and $-0.344$ on investment; the corresponding Pearson correlations are $+0.313$, $-0.215$, $-0.213$, and $-0.318$. Frontier gains are therefore nearly rank-unrelated to distortion on value and momentum and only moderately associated with lower distortion on profitability and investment. The divergence between Pearson and Spearman correlations, particularly on momentum, provides no support for a stable linear or monotone mapping between the two criteria.
	
	The starting curve helps explain these differences. CAPM begins with a relatively small value-axis $IAE$ of 24.1 bp, so a large correction can cross zero and increase absolute distortion, whereas investment and momentum begin at 49.7 and 72.1 bp and leave more room for improvement. Even within an axis, however, frontier contribution does not identify the flattest destination. q\_EG produces the largest $\Delta SR$ in the library, 1.240, yet leaves $IAE$ of 61.8 bp on value and 39.8 bp on investment. Factors with much smaller gains---BE/ME, OPE/BE, NOA\_GR1A, and ret 12--7---leave substantially flatter curves on their respective coordinates. Maximum-Sharpe gains therefore measure the potential scale of correction, not where it lands. These comparisons are full-sample results; the value subperiod evidence further shows that the destination associated with a given model need not remain constant through time.
	
	\FloatBarrier
	
	\subsection{Counterpart-Factor Reconstruction}
	\label{subsec:construction_sensitivity}
	
	The appraisal ratio and residual alignment in \eqref{eq:alpha_shift} belong to a traded portfolio relative to the remaining model, not to the characteristic in the abstract. I therefore replace each canonical counterpart factor with the univariate reconstruction defined in \Cref{subsec:prime-factors}: $\mathrm{HML}'$, $\mathrm{RMW}'$, and $\mathrm{CMA}'$ in FF5, and $\mathrm{UMD}'$ in FF6. Letting $F\setminus g$ denote the model without its counterpart factor and $A=\alpha_{F\setminus g}(Y^x)$, \Cref{tab:prime-replacement} compares the canonical and reconstructed factors against the same full-sample base.
	
	\begin{table}[!t]
		\centering
		\singlespacing
		\caption{Replacing canonical counterpart factors with univariate reconstructions}
		\label{tab:prime-replacement}
		\footnotesize
		\setlength{\tabcolsep}{3.2pt}
		\begin{tabular}{llrrrrrrrrr}
			\toprule
			& & \multicolumn{5}{c}{Decomposition of the signed change}
			& \multicolumn{4}{c}{Resulting specification} \\
			\cmidrule(lr){3-7}\cmidrule(lr){8-11}
			Axis
			& $g$
			& $\alpha_{F\setminus g}(g)$
			& $\sigma(\varepsilon_g)$
			& $IR_g$
			& $\operatorname{corr}$
			& $\Delta SA$
			& Rank $\alpha$
			& $p$
			& $R^2_Y$
			& $IAE$ \\
			\midrule
			Value ($A=-40.6$)
			& HML             & $-114.7$ & 7.47\%  & $-0.153$ & 0.739 & $+12.0$ & $-28.6$ & 0.0030 & 0.760 & 28.8 \\
			& $\mathrm{HML}'$ & $-154.6$ & 8.44\%  & $-0.183$ & 0.799 & $+15.5$ & $-25.1$ & 0.0039 & 0.808 & 25.7 \\
			\addlinespace
			OP ($A=+37.8$)
			& RMW             & 479.1 & 7.14\% & 0.671 & 0.661 & $-42.8$ & $-5.0$  & 0.5951 & 0.529 & 6.1 \\
			& $\mathrm{RMW}'$ & 551.9 & 7.66\% & 0.720 & 0.727 & $-50.5$ & $-12.8$ & 0.1707 & 0.606 & 12.7 \\
			\addlinespace
			Investment ($A=+22.8$)
			& CMA             & 298.4 & 4.89\% & 0.610 & 0.730 & $-35.3$ & $-12.5$ & 0.1185 & 0.761 & 14.4 \\
			& $\mathrm{CMA}'$ & 165.4 & 6.55\% & 0.253 & 0.804 & $-16.1$ & $+6.7$  & 0.3634 & 0.820 & 8.5 \\
			\addlinespace
			Momentum ($A=+91.2$)
			& UMD             & 813.7 & 13.99\% & 0.582 & 0.885 & $-84.2$ & $+7.0$  & 0.5536 & 0.805 & 16.1 \\
			& $\mathrm{UMD}'$ & 796.3 & 16.10\% & 0.495 & 0.868 & $-70.2$ & $+21.0$ & 0.1043 & 0.778 & 21.5 \\
			\bottomrule
		\end{tabular}
		
		\vspace{0.4em}
		\begin{minipage}{0.97\textwidth}
			\footnotesize
			\emph{Note:} $F$ is FF5 for value, operating profitability, and investment and FF6 for momentum. Each pair uses the common base $F\setminus g$, with $A=\alpha_{F\setminus g}(Y^x)$ and signed shift $\Delta SA$ from \eqref{eq:alpha_shift}; $A+\Delta SA$ reproduces Rank $\alpha$ up to rounding. Factor alphas, $A$, $\Delta SA$, Rank $\alpha$, and $IAE$ are annualized basis points, and residual volatility is annualized. Rank-alpha p-values use Newey--West standard errors with six monthly lags. Among the reconstructions, $\mathrm{HML}'$ is rejected on all nonlinear functionals, $\mathrm{UMD}'$ on $SUP$ alone, and $\mathrm{RMW}'$ and $\mathrm{CMA}'$ on none.
		\end{minipage}
	\end{table}
	
	\begin{figure}[!t]
		\centering
		\begin{minipage}{0.48\textwidth}
			\centering
			\includegraphics[width=\linewidth]{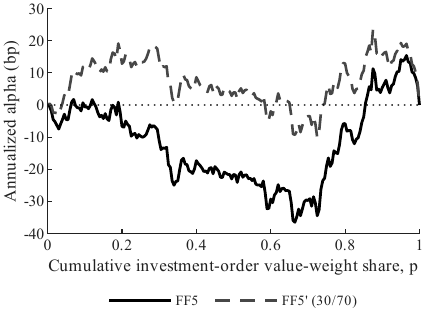}
			\caption*{A. Investment axis: $\mathrm{CMA}\rightarrow\mathrm{CMA}'$}
		\end{minipage}
		\hfill
		\begin{minipage}{0.48\textwidth}
			\centering
			\includegraphics[width=\linewidth]{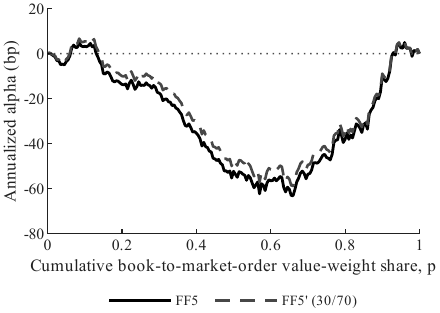}
			\caption*{B. Value axis: $\mathrm{HML}\rightarrow\mathrm{HML}'$}
		\end{minipage}
		
		\caption{Bridge-alpha curves under counterpart-factor reconstruction}
		\label{fig:prime-curves}
		
		\vspace{0.3em}
		\begin{minipage}{0.94\linewidth}
			\footnotesize
			\emph{Note:} Full-sample monthly curves for FF5 and FF5 with the counterpart factor replaced by its univariate reconstruction. Replacement reverses and attenuates the investment curve but leaves value significantly negative. Alphas are annualized basis points.
		\end{minipage}
	\end{figure}
	
	The four axes respond differently. Replacing CMA with $\mathrm{CMA}'$ moves the investment rank alpha from $-12.5$ to $+6.7$ bp, lowers $IAE$ from 14.4 to 8.5 bp, and raises $R^2_Y$ from 0.761 to 0.820. Value remains significantly negative at $-25.1$ bp and is rejected on all nonlinear functionals, while profitability and momentum move away from zero, from $-5.0$ to $-12.8$ bp and from $+7.0$ to $+21.0$ bp. Reconstruction therefore flattens investment, leaves the value rejection intact, and worsens the other two axes rather than acting as a generic flattening device.
	
	The decomposition explains the contrast. On investment, reconstruction lowers the factor alpha against the remaining FF5 factors and raises residual volatility, reducing $IR_g$ from 0.610 to 0.253 despite stronger alignment, from 0.730 to 0.804. The weaker appraisal ratio contracts $\Delta SA$ from $-35.3$ to $-16.1$ bp and moves the common $+22.8$ bp baseline back into positive territory. Under $\mathrm{RMW}'$, both $IR_g$ and alignment rise, making the profitability correction more negative; under $\mathrm{UMD}'$, both weaken, reducing the momentum correction and leaving a larger positive curve.
	
	Value shows why counterpart-factor content does not imply counterpart-factor attribution. HML and $\mathrm{HML}'$ have positive residual alignment but negative appraisal ratios against the remaining FF5 factors, so both shift the common $-40.6$ bp baseline upward, to $-28.6$ and $-25.1$ bp. Removing HML therefore deepens the value distortion: HML mitigates rather than generates it. The subperiod evidence in \Cref{subsec:subperiod} shows that the destination varies through time without changing this narrower full-sample attribution.
	
	These results reinforce the model-by-axis fingerprint. Neither high $R^2_Y$ nor a large maximum-Sharpe gain guarantees a zero curve: the appraisal ratio scales the potential correction, residual alignment governs transmission, and the starting curve determines the destination. Because these objects belong to the factor portfolio and base model, the same signal can land differently when repackaged, and the full-sample replacement effects need not remain constant across subperiods.
	
	\FloatBarrier
	
	\section{Robustness Checks}
	\label{sec:robust}
	
	This section benchmarks the axis results against conventional decile tests and examines robustness to factor implementation, block-bootstrap inference, finite-sample behavior, and grid-covariance singularity.
	
	\subsection{Decile and axis tests}
	\label{subsec:robust_grs_comparison}
	
	For each axis, I estimate ten value-weighted decile alphas, their GRS joint test, and the extreme high-minus-low alpha using the same monthly samples and models as in the axis analysis. The comparison covers the market-capitalization axis of \citet{Shin26} and the four characteristic axes studied here. GRS tests arbitrary joint alpha configurations on supplied portfolios and can pool anomaly families, whereas the axis diagnostic uses the predetermined order through its prefix path, rank-area payoff, and nonlinear functionals. It therefore trades broader omnibus scope for deeper resolution within one coordinate.
	
	\begin{table}[!htbp]
		\centering
		\singlespacing
		\caption{Illustrative disagreements between conventional decile and axis tests}
		\label{tab:grs-axis-comparison}
		\footnotesize
		\setlength{\tabcolsep}{6pt}
		\begin{tabular}{llrrrr}
			\toprule
			Axis
			& Model
			& Decile GRS $p$
			& H--L $p$
			& Rank-area $p$
			& Axis $IAE$ $p$ \\
			\midrule
			Market capitalization
			& FF5
			& 0.8218
			& 0.9831
			& 0.0041
			& 0.0039 \\
			
			Market capitalization
			& q5
			& 0.0107
			& 0.0472
			& 0.6897
			& 0.8686 \\
			
			Value
			& FF5
			& 0.1024
			& 0.1695
			& 0.0030
			& 0.0032 \\
			
			Investment
			& FF5
			& 0.0186
			& 0.7928
			& 0.1185
			& 0.1027 \\
			
			Momentum
			& Carhart
			& 0.0002
			& 0.0007
			& 0.5374
			& 0.2371 \\
			\bottomrule
		\end{tabular}
		
		\vspace{0.4em}
		\begin{minipage}{0.95\textwidth}
			\footnotesize
			\emph{Note:} GRS uses ten value-weighted deciles formed along each order, and H--L is the extreme-decile spread. Rank-area p-values use Newey--West standard errors with six monthly lags; axis $IAE$ p-values use the model-specific HAC-GP null in \Cref{subsec:finite_axis_implementation}. Because the procedures differ in test assets, alternatives, and calibration, the comparison is diagnostic rather than a ranking of unconditional power. The full 30-cell comparison appears in \Cref{tab:app-grs-axis-full}.
		\end{minipage}
	\end{table}
	
	Across the 30 model--axis cells, GRS and axis $IAE$ agree in 20, with ten joint rejections and ten joint non-rejections. The remaining disagreements divide evenly between five GRS-only and five axis-only rejections, so neither procedure is uniformly more aggressive. The capitalization axis displays both directions: FF3, FF5, and FF6 pass GRS but fail rank area and $IAE$, whereas q5 fails GRS and the extreme spread but passes both axis statistics.
	
	The other disagreements reflect the alternatives each procedure targets. FF5 and FF6 on value leave broad negative distortions through the interior of the order that decile tests miss. Carhart and FF6 on momentum, and FF5 on investment, instead contain portfolio-level errors detected by GRS that do not accumulate into large rank-area or $IAE$ statistics. GRS is sensitive to sparse, oscillating, endpoint-specific, and other finite-dimensional configurations; the axis statistics concentrate power on errors that persist cumulatively along the coordinate.
	
	The procedures nevertheless identify the same clear failures and passes in most cells, while GRS and the extreme spread agree in 28 of 30. Factor reconstruction provides a further common check: replacing CMA with $\mathrm{CMA}'$ raises the FF5 investment-decile GRS p-value from 0.0186 to 0.291, lowers axis $IAE$ from 14.4 to 8.5 bp, and moves the rank-area alpha from $-12.5$ to $+6.7$ bp. Alternative packaging can therefore improve both conventional and order-aware pricing, even though the corresponding HML reconstruction leaves value significantly negative.
	
	\subsection{Anomaly-factor replication checks}
	\label{subsec:robust_factor_replication}
	
	The accounting axes and factor replications use the same timing convention: at June $Y$ formation, I select fiscal year $Y-1$, with investment growth additionally using total assets from $Y-2$. Using the same CRSP--Compustat merge, return accounting, annual formation schedule, and published breakpoints as in the main analysis, I reconstruct the standard $2\times3$ factors:
	
		{
		\setlength{\jot}{2pt}
		\[
		\begin{aligned}
			HML_t
			&=
			\tfrac{1}{2}(SH_t+BH_t)
			-
			\tfrac{1}{2}(SL_t+BL_t),
			\\
			RMW_t
			&=
			\tfrac{1}{2}(SR_t+BR_t)
			-
			\tfrac{1}{2}(SW_t+BW_t),
			\\
			CMA_t
			&=
			\tfrac{1}{2}(SLoINV_t+BLoINV_t)
			-
			\tfrac{1}{2}(SHiINV_t+BHiINV_t),
			\\
			INV_{i,Y}
			&=
			\frac{AT_{i,Y-1}-AT_{i,Y-2}}{AT_{i,Y-2}}.
		\end{aligned}
		\]
	}
	
	Exact equality with the published series is not expected because of historical data revisions and differences in book equity, delisting treatment, and reinvestment conventions. The reconstructed factors serve only as implementation diagnostics; all reported bridge regressions use published factors. Component correlations directly assess stock assignment and portfolio formation, while factor correlations, betas, and intercepts assess the resulting long--short combinations.
	
	\begin{table}[H]
		\centering
		\caption{Anomaly-factor replication}
		\label{tab:robust-replication}
		\footnotesize
		\begin{tabular}{lcccccccc}
			\toprule
			& \multicolumn{3}{c}{Monthly}
			& \multicolumn{3}{c}{Daily}
			& \multicolumn{2}{c}{Component portfolios} \\
			\cmidrule(lr){2-4}
			\cmidrule(lr){5-7}
			\cmidrule(lr){8-9}
			Factor
			& Corr. & Beta & $\alpha$ (bp)
			& Corr. & Beta & $\alpha$ (bp)
			& Monthly & Daily \\
			\midrule
			HML
			& 0.970 & 1.018 & $-45.7$
			& 0.961 & 0.978 & $-33.9$
			& 0.979--0.998 & 0.979--0.997
			\\
			RMW
			& 0.955 & 0.940 & $-5.0$
			& 0.936 & 0.932 & $-8.7$
			& 0.988--0.997 & 0.988--0.997
			\\
			CMA
			& 0.943 & 0.923 & $21.4$
			& 0.908 & 0.913 & $19.3$
			& 0.985--0.997 & 0.980--0.996
			\\
			\bottomrule
		\end{tabular}
		
		\vspace{0.4em}
		\begin{minipage}{0.95\linewidth}
			\footnotesize
			\emph{Note:} Corr., Beta, and $\alpha$ compare each reconstructed factor with its published counterpart; $\alpha$ is an annualized intercept in basis points. The last two columns report correlations across the six underlying $2\times3$ portfolios. The reconstructed factors are used only in this table. Monthly and daily samples contain 696 and 14,598 observations, respectively.
		\end{minipage}
	\end{table}
	
	Factor correlations are 0.943--0.970 monthly and 0.908--0.961 daily, with betas of 0.913--1.018, while component correlations are uniformly higher at 0.979--0.998 monthly and 0.979--0.997 daily. The $-45.7$ bp HML intercept shows that small component discrepancies need not cancel in the final spread, but cannot affect the value-axis results because those regressions use published HML. The close component agreement weakens explanations based on merging, fiscal-year timing, breakpoint assignment, or portfolio formation. UMD requires no corresponding accounting check because it uses neither Compustat data nor fiscal-year matching.
	
	\FloatBarrier
	
	\subsection{Block-bootstrap critical values}
	\label{subsec:robust_block_bootstrap}
	
	As a distributional check on HAC-GP inference, I recompute nonlinear-functional critical values using a circular six-month moving-block residual bootstrap. For each model--axis pair, write $\bm D_t=\bm\alpha+f_t'\bm B+\bm\varepsilon_t$. I center the $L$-dimensional residual vector grid point by grid point and resample it jointly, preserving contemporaneous cross-grid and short-range serial dependence. The zero-curve null is imposed through
	\begin{equation}
		\bm D^{*(b)}_t
		=
		f_t'\widehat{\bm B}
		+
		\widetilde{\bm\varepsilon}^{*(b)}_t,
		\qquad
		b=1,\ldots,B=4{,}999,
		\label{eq:block_bootstrap_null_dgp}
	\end{equation}
	and each resample re-estimates the full regression grid. The upper-tail p-value is
	\begin{equation}
		\widehat p^{\,BB}_F
		=
		\frac{
			1+
			\sum_{b=1}^{B}
			\mathbf 1
			\left\{
			F(\widehat{\bm\alpha}^{*(b)})
			\geq
			F(\widehat{\bm\alpha})
			\right\}
		}{
			B+1
		},
		\qquad
		F\in\{IAE,ISE,SUP\}.
		\label{eq:block_bootstrap_pvalue}
	\end{equation}
	Unlike HAC-GP, the bootstrap bypasses the estimated long-run covariance matrix, its positive-semidefinite adjustment, and Gaussian simulation.
	
	\begin{table}[!htbp]
		\centering
		\singlespacing
		\caption{HAC-GP and block-bootstrap rejection maps}
		\label{tab:robust-block-bootstrap-summary}
		\footnotesize
		\begin{tabular}{lcccc}
			\toprule
			& & \multicolumn{2}{c}{Cells rejected at 5\%} & \\
			\cmidrule(lr){3-4}
			Axis & Cells & HAC-GP & Bootstrap & Agreement \\
			\midrule
			Value                   & 18 & 12 & 12 & 18/18 \\
			Operating profitability & 18 &  3 &  4 & 15/18 \\
			Investment              & 18 &  5 &  9 & 14/18 \\
			Momentum                & 18 &  9 &  9 & 18/18 \\
			\midrule
			Total                   & 72 & 29 & 34 & 65/72 \\
			\bottomrule
		\end{tabular}
		
		\vspace*{0.4em}
		\begin{minipage}{0.95\linewidth}
			\footnotesize
			\emph{Note:} Each axis contains six models and three nonlinear functionals. HAC-GP results are from \Cref{tab:value-axis-main}--\Cref{tab:mom-axis-main}; bootstrap results use 4,999 circular six-month residual resamples under \eqref{eq:block_bootstrap_null_dgp}.
		\end{minipage}
	\end{table}
	
	The procedures agree in 65 of 72 cells and in every value and momentum cell. The seven boundary disagreements are concentrated in profitability and investment, where the bootstrap produces somewhat more rejections. It therefore strengthens selected negative-investment results without changing the main value or momentum conclusions, while qualifying any blanket claim that q5 is never rejected.
	
	\subsection{Finite-sample calibration}
	\label{subsec:robust_size_power}
	
	Holding the empirical monthly factors, fitted loadings, and centered residual curves fixed, I simulate
	\begin{equation}
		\bm\alpha^x_m(\lambda)
		=
		\lambda\widehat{\bm\alpha}^x_m,
		\qquad
		\lambda\in\{0,0.25,0.5,0.75,1\},
		\label{eq:power_empirical_shape_alternative}
	\end{equation}
	within the bootstrap environment of \eqref{eq:block_bootstrap_null_dgp}. The case $\lambda=0$ measures size, while positive values preserve each estimated curve's shape and location but scale its magnitude. For each $\lambda$, I generate 4,999 evaluation samples. HAC-GP critical values use an independent set of 50,000 Gaussian draws, while bootstrap critical values use an independent set of 4,999 null resamples. Because the alternatives remain model-specific, rejection frequencies are not common-effect-size comparisons across models or axes.
	
	\begin{table}[!htbp]
		\centering
		\singlespacing
		\caption{Finite-sample size and power calibration over alternative strength}
		\label{tab:robust-size-power-calibration}
		\footnotesize
		\setlength{\tabcolsep}{4.5pt}
		\resizebox{\textwidth}{!}{
			\begin{tabular}{llrrrrrl}
				\toprule
				Axis & Critical value
				& $\lambda=0$
				& $\lambda=0.25$
				& $\lambda=0.50$
				& $\lambda=0.75$
				& $\lambda=1$
				& $IAE$ at $\lambda=1$ \\
				\midrule
				Value & HAC-GP
				& 5.3\% & 10.3\% & 25.6\% & 47.7\% & 66.0\% & 22--29 bp \\
				& Block bootstrap
				& 5.0\% & 9.8\% & 24.5\% & 46.5\% & 65.5\% & \\
				\addlinespace
				
				Operating profitability & HAC-GP
				& 4.5\% & 6.0\% & 11.0\% & 20.2\% & 32.9\% & 6--32 bp \\
				& Block bootstrap
				& 5.0\% & 6.6\% & 12.0\% & 21.6\% & 34.6\% & \\
				\addlinespace
				
				Investment & HAC-GP
				& 3.8\% & 6.3\% & 14.9\% & 31.1\% & 51.7\% & 14--50 bp \\
				& Block bootstrap
				& 5.2\% & 8.3\% & 18.9\% & 36.8\% & 57.0\% & \\
				\addlinespace
				
				Momentum & HAC-GP
				& 3.5\% & 7.9\% & 22.3\% & 42.1\% & 56.5\% & 15--93 bp \\
				& Block bootstrap
				& 4.7\% & 10.2\% & 27.1\% & 46.6\% & 60.3\% & \\
				\bottomrule
			\end{tabular}
		}
		
		\vspace*{0.4em}
		\begin{minipage}{0.95\linewidth}
			\footnotesize
			\emph{Note:} Entries average 5\% rejection frequencies across six models and three nonlinear functionals. Positive $\lambda$ scales each model's empirical curve, so axis averages do not represent common alternatives. HAC-GP uses 50,000 independent Gaussian draws; bootstrap quantities use separate sets of 4,999 circular six-month resamples.
		\end{minipage}
	\end{table}
	
	Null rejection remains near 5\% under both methods, and power rises monotonically with $\lambda$. The lower profitability average partly reflects the nearly flat FF5 and FF6 planted alternatives rather than test failure.
	
	The FF6--q5 momentum comparison illustrates model-specific resolution. Their observed $IAE$ values are similar, at 16.1 and 17.8 bp, but HAC-GP rejection at $\lambda=1$ is 18\% for FF6 and 4\% for q5. FF6 has $R^2_Y=0.805$ and a 23.4 bp critical value, versus 0.134 and 52.4 bp for q5. Although differences in planted curve shape prevent attributing the power gap solely to covariance fit, the comparison shows that similarly sized distortions can have sharply different evidentiary content. The calibration establishes local power against empirical-shape alternatives, not uniform power against all possible curves.
	
	\subsection{Rank deficiency of the grid covariance matrix}
	\label{subsec:robust_singularity}
	
	The $201\times201$ grid covariance matrix $\widehat{\Omega}^x_m$ is structurally singular. Endpoint bridges are identically zero, and adjacent interior bridges differ only through stocks near the moving cutoff, inducing substantial cross-grid dependence.
	
	The inference procedure accommodates this directly. $IAE$, $ISE$, and $SUP$ require only a positive-semidefinite square root of $\widehat{\Omega}_{+}/T$, never its inverse; the only inversion is the small factor second-moment matrix $Q_m$ in \eqref{eq:axis_alpha_influence}. Moreover,
	\[
	\frac{
		\left\|
		\widehat{\Omega}_{+}
		-
		\widehat{\Omega}
		\right\|_{F}
	}{
		\left\|
		\widehat{\Omega}
		\right\|_{F}
	}
	=
	1.4\times10^{-10},
	\]
	so the positive-semidefinite adjustment is numerically negligible. The bootstrap bypasses $\widehat{\Omega}$ entirely yet agrees with HAC-GP in 65 of 72 cells. Rank deficiency is therefore an expected feature of the bridge path, not evidence that numerical instability drives the results.
	
	\section{Discussion}
	\label{sec:discuss}
	
	\paragraph{Interpreting where corrections land.}
	\label{subsec:discuss_overcorrection}
	
	Counterpart factors generally shift their characteristic-axis curves downward, but they do not land at the same destination. RMW and UMD bring profitability and momentum close to zero, HML-based models leave value significantly negative, and investment-factor models cross zero with weaker inferential support. By \eqref{eq:curve_shift}, the appraisal ratio scales the potential movement, residual alignment governs its transmission along the order, and the starting curve determines whether the enlarged model remains undercorrected, becomes flat, or overshoots. A downward shift is therefore not sufficient evidence of successful pricing, nor can the destination be attributed mechanically to the factor sharing the characteristic's name. Removing HML from FF5 makes the value curve more negative, whereas CMA supplies enough correction to carry investment below zero. The axis diagnostic locates the remaining error; attribution requires addition, removal, or replacement experiments.
	
	\paragraph{Time variation in the value destination.}
	\label{subsec:discuss_time_variation}
	
	The value overshoot is recurrent but not time-invariant. Rolling estimates move from an early near-zero-to-positive interval to a deep late-1990s trough, subsequent attenuation, and renewed negative distortion near the sample end, without identifying a unique structural break. The comparison between FF5 and q5 also separates estimated direction from test resolution: both frequently produce negative rank-area estimates, but q5 rejects much less often because its standard errors and functional critical values are wider. Smoothly rising capitalization coverage does not reproduce the nonmonotone chronology, although lower early coverage remains a qualification on the scope of the earliest estimates.
	
	\paragraph{Functional geometry and test resolution.}
	\label{subsec:discuss_functional}
	
	The four summaries emphasize different features of the same curve: $SA$ records direction, $IAE$ absolute magnitude, $ISE$ concentration, and $SUP$ the largest local deviation. Their occasional disagreements therefore reflect curve geometry rather than contradictory evidence. Inference adds a second dimension because each nonlinear functional is calibrated to the model's own cross-grid residual covariance. A noisy curve faces wider null bands, so non-rejection combines estimated magnitude with available resolution rather than constituting an unconditional pricing pass.
	
	\paragraph{Relation to spanning and existing tests.}
	\label{subsec:discuss_positioning}
	
	The characteristic-axis diagnostic complements rather than replaces spanning and conventional specification tests. Maximum-Sharpe spanning asks whether an added factor expands the attainable frontier, GRS tests whether a model prices a supplied finite asset set, and the axis diagnostic asks whether zero alpha holds on the return space generated by one predetermined order. Their empirical disagreements run in both directions: broad ordered distortions can be detected by the axis procedure despite passing conventional decile tests, while localized or nonmonotone decile errors may trigger GRS without accumulating along the coordinate. GRS retains broader omnibus scope over arbitrary asset systems and cross-anomaly dependence; the axis procedure gives up that role for deeper resolution within one prespecified coordinate. Spanning remains distinct from both. The factor scan shows that large frontier gains need not flatten the relevant axis because appraisal ratios determine only the potential scale of correction, while signed residual alignment and the starting curve determine its destination.
	
	\paragraph{What a rejection identifies, and what it does not.}
	\label{subsec:discuss_scope}
	
	An axis rejection concerns a particular traded model on a particular coordinate-generated return space, not the characteristic in the abstract. Canonical factors may be size-balanced or formed through multidimensional sorts, whereas the empirical axes are cap-weighted; this construction mismatch may explain a rejection but does not remove the resulting pricing error. Reconstructed counterpart factors materially move all four curves, confirming that the same signal can reach a different destination when packaged as a different traded portfolio, although the replacement experiments do not isolate any single design feature. The conclusions remain local in three respects. Accounting-axis results concern valid CRSP--Compustat subuniverses, and gate-failing models are interpreted only on the internal zero-investment bridge space. Results are model-by-axis specific and provide neither a global ranking nor a cross-axis omnibus test. Finally, population pricing concerns only the span generated by the chosen order, while empirical non-rejection means only that no distortion is detected on the implemented grid at the available resolution. This restricted scope is the purpose of the diagnostic: to identify where factor correction lands without claiming validity outside the prespecified coordinate.
	
	\section{Conclusion}
	\label{sec:conclusion}
	
	Maximum-Sharpe spanning measures frontier expansion, not where a factor leaves pricing errors along the characteristic it is meant to price. This paper develops a characteristic-axis integral diagnostic for that complementary question. A predetermined characteristic generates a full path of cap-weighted prefix portfolios, and subtracting equal exposure to the corresponding valid-universe aggregate produces a bridge-alpha curve. Together, the aggregate gate and zero-curve restriction price the prefix, tail, interval, and step-function returns generated by the order without relying on a few selected bins.
	
	Counterpart factors generally shift these curves downward, but they land differently. HML-based models leave value significantly negative, investment factors also cross zero with weaker evidence, and RMW and UMD bring profitability and momentum close to zero. The value destination varies through time rather than reflecting one isolated episode, while q5 shows that similar point estimates can yield different conclusions when residual uncertainty and functional null bands differ.
	
	Neither covariance fit nor maximum-Sharpe contribution determines the destination. The appraisal ratio scales the potential correction, but residual alignment and the starting curve determine whether the enlarged model undercorrects, flattens, or overshoots. The outcome also belongs to the traded implementation: removing or repackaging counterpart factors can materially move the same characteristic axis.
	
	The interpretation remains local. A rejection concerns a particular traded model on a prespecified coordinate-generated return space, not global model failure or invalidity of the underlying characteristic premium. Population pricing extends only to the subspace generated by that order, and empirical non-rejection means only that no distortion is detected at the implemented resolution. Within those limits, the diagnostic reveals where factor corrections land, how precisely those destinations are identified, and how they change through time.
	
	\vspace*{2em}
	
	\paragraph{Funding}
	This research did not receive any specific grant from funding agencies in the public, commercial, or not-for-profit sectors.
	
	\paragraph{Declaration of AI usage} 
	During the preparation of this manuscript, the author used ChatGPT (OpenAI) and Claude (Anthropic) for language refinement and structural clarity. All outputs were reviewed and edited by the author, who takes full responsibility for the content.
	
	\paragraph{Declaration of interest}
	The author declares no competing interests.
	
	\newpage
	\begin{appendices}
		
	\section{Rolling-window summary for the value axis}
	\label{app:value_subperiod_full}
	
	This appendix condenses the rolling-window evidence underlying \Cref{fig:value-rolling-subperiod}. Rather than reproducing every model--window cell, \Cref{tab:value-rolling-summary} summarizes the complete set of 49 overlapping 120-month windows for the models most relevant to the full-sample overshoot and precision comparison. \Cref{tab:value-rolling-selected} then reports selected windows marking the main changes visible in the figure. The complete machine-readable results, including all models, aggregate-gate tests, rank-area standard errors, and functional statistics, are included in the replication archive.
	
	\begin{table}[H]
		\centering
		\singlespacing
		\caption{Summary of rolling ten-year value-axis results}
		\label{tab:value-rolling-summary}
		\small
		\setlength{\tabcolsep}{5.0pt}
		\begin{tabular}{lrrrrrr}
			\toprule
			Model
			& Gate-valid
			& Negative
			& $IAE$ reject
			& Most negative
			& Final-window
			& Final $p(IAE)$ \\
			\midrule
			FF5
			& 44
			& 34 (77.3\%)
			& 16 (36.4\%)
			& $-92.7$
			& $-58.6$
			& 0.002 \\
			
			FF6
			& 43
			& 34 (79.1\%)
			& 11 (25.6\%)
			& $-74.4$
			& $-55.8$
			& 0.004 \\
			
			q5
			& 47
			& 28 (59.6\%)
			& 3 (6.4\%)
			& $-93.5$
			& $-43.0$
			& 0.154 \\
			\bottomrule
		\end{tabular}
		
		\vspace{0.4em}
		\begin{minipage}{0.94\linewidth}
			\footnotesize
			\emph{Note:} The sample contains 49 overlapping 120-month windows from 1967--1976 through 2015--2024. ``Gate-valid'' counts windows in which the characteristic-valid aggregate passes at 5\%. ``Negative'' and ``$IAE$ reject'' are computed only among gate-valid windows. Rank-area alphas are annualized basis points. ``Most negative'' reports the minimum rank-area alpha across gate-valid windows. The final window is 2015--2024. Because adjacent windows overlap by 108 months, the counts describe persistence rather than independent replications.
		\end{minipage}
	\end{table}
	
	The summary separates the prevalence of a negative point estimate from the precision with which it is identified. FF5 and FF6 produce negative destinations in roughly four fifths of their aggregate-valid windows and reject the functional null in approximately one quarter to one third. q5 also produces a negative estimate in a majority of valid windows, including a late-1990s minimum comparable to that of FF5, but rejects in only three windows. Its lower rejection frequency therefore reflects substantially wider sampling uncertainty and functional null bands, not the absence of economically large negative estimates.
	
	\begin{table}[H]
		\centering
		\singlespacing
		\caption{Selected rolling ten-year value-axis windows}
		\label{tab:value-rolling-selected}
		\small
		\setlength{\tabcolsep}{4.5pt}
		\begin{tabular}{lrrrr}
			\toprule
			Window
			& CAPM
			& FF3
			& FF5
			& q5 \\
			\midrule
			1967--1976
			& $+56.5\;[.294]$
			& $-44.7\;[.133]$
			& $-30.3\;[.185]$
			& $-19.2\;[.540]$ \\
			
			1974--1983
			& $\mathbf{+135.0}\;[<.001]$
			& $+20.9\;[.402]$
			& $+30.7\;[.258]$
			& $+40.1\;[.407]$ \\
			
			1980--1989
			& $\mathbf{+89.6}\;[.006]$
			& $-14.0\;[.522]$
			& $-18.5\;[.455]$
			& $-2.0\;[.603]$ \\
			
			1991--2000
			& $+57.9\;[.413]$
			& $\mathbf{-85.6}\;[<.001]$
			& $\mathbf{-92.7}\;[<.001]$
			& $\mathbf{-93.5}\;[.023]$ \\
			
			1999--2008
			& $+95.6\;[.133]$
			& $+8.0\;[.900]$
			& $-5.6\;[.913]$
			& $+15.5\;[.814]$ \\
			
			2003--2012
			& $+0.8\;[.985]$
			& $+3.2\;[.950]$
			& $+4.4\;[.779]$
			& $+12.3\;[.860]$ \\
			
			2009--2018
			& $\mathbf{-79.2}\;[.033]$
			& $-27.8\;[.065]$
			& $\mathbf{-29.1}\;[.027]$
			& $-28.3\;[.311]$ \\
			
			2012--2021
			& $-89.6\;[.126]$
			& $\mathbf{-39.1}\;[.039]$
			& $\mathbf{-33.3}\;[.046]$
			& $-29.1\;[.323]$ \\
			
			2015--2024
			& $-104.2^{\ddagger}\;[.123]$
			& $-67.3^{\ddagger}\;[.001]$
			& $\mathbf{-58.6}\;[.002]$
			& $-43.0\;[.154]$ \\
			\bottomrule
		\end{tabular}
		
		\vspace{0.4em}
		\begin{minipage}{0.94\linewidth}
			\footnotesize
			\emph{Note:} Each cell reports the annualized rank-area alpha in basis points, followed by the HAC--GP $p(IAE)$ in brackets. Bold entries reject the functional null at 5\% while passing the aggregate gate. $^{\ddagger}$~The characteristic-valid aggregate rejects at 5\%, so the corresponding axis statistic is descriptive rather than unconditional evidence. The windows are selected to represent the early sample, the temporary positive destination, the late-1990s negative trough, the subsequent near-zero interval, and the late-sample re-emergence. They are not additional independent tests.
		\end{minipage}
	\end{table}
	
	The HML-based models move from a temporary positive or near-zero destination early in the sample to a late-1990s negative trough, return close to zero, and become negative again near the end. q5 often moves in the same direction but rejects less frequently because of its wider null distribution, supporting a time-varying rather than constant full-sample destination.
	
	\FloatBarrier
	
	\newpage
	
	\section{Full Comparison with Conventional Decile Tests}
	\label{app:grs_axis_full}
	
	\Cref{tab:app-grs-axis-full} reports the complete comparison underlying \Cref{subsec:robust_grs_comparison}. The capitalization panel extends the comparison to the cap-axis benchmark studied in \citet{Shin26}, while the remaining panels cover the four characteristic axes examined in this paper. For each of the five axes and six benchmark models, the table presents the
	conventional GRS p-value for ten value-weighted deciles, the Newey--West p-value for the extreme-decile spread, the Newey--West p-value for the rank-area alpha, and the model-specific HAC-GP p-value for axis $IAE$.
	
	\begin{table}[!htbp]
		\centering
		\singlespacing
		\caption{Full comparison of conventional decile and axis tests}
		\label{tab:app-grs-axis-full}
		\scriptsize
		\setlength{\tabcolsep}{8pt}
		\begin{tabular}{lrrrr}
			\toprule
			Model
			& GRS $p$
			& H--L $p$
			& Rank-area $p$
			& Axis $IAE$ $p$ \\
			\midrule
			
			\multicolumn{5}{l}{\textit{Panel A: Market capitalization}} \\
			CAPM
			& 0.7275
			& 0.8190
			& 0.5552
			& 0.5814 \\
			
			FF3
			& 0.0899
			& 0.1440
			& \textbf{0.0050}
			& \textbf{0.0063} \\
			
			Carhart
			& \textbf{0.0324}
			& 0.1601
			& \textbf{0.0092}
			& \textbf{0.0102} \\
			
			FF5
			& 0.8218
			& 0.9831
			& \textbf{0.0041}
			& \textbf{0.0039} \\
			
			FF6
			& 0.5060
			& 0.9852
			& \textbf{0.0065}
			& \textbf{0.0067} \\
			
			q5
			& \textbf{0.0107}
			& \textbf{0.0472}
			& 0.6897
			& 0.8686 \\
			
			\addlinespace[0.6em]
			\multicolumn{5}{l}{\textit{Panel B: Value}} \\
			CAPM
			& 0.0673
			& 0.2476
			& 0.3226
			& 0.3205 \\
			
			FF3
			& \textbf{0.0053}
			& \textbf{0.0058}
			& \textbf{0.0050}
			& \textbf{0.0041} \\
			
			Carhart
			& \textbf{0.0119}
			& \textbf{0.0133}
			& \textbf{0.0037}
			& \textbf{0.0029} \\
			
			FF5
			& 0.1024
			& 0.1695
			& \textbf{0.0030}
			& \textbf{0.0032} \\
			
			FF6
			& 0.1506
			& 0.2469
			& \textbf{0.0030}
			& \textbf{0.0025} \\
			
			q5
			& 0.1814
			& 0.8220
			& 0.3033
			& 0.1811 \\
			
			\addlinespace[0.6em]
			\multicolumn{5}{l}{\textit{Panel C: Operating profitability}} \\
			CAPM
			& \textbf{0.0087}
			& \textbf{0.0045}
			& 0.0520
			& 0.0554 \\
			
			FF3
			& \textbf{\ensuremath{<}0.0001}
			& \textbf{0.0008}
			& \textbf{0.0495}
			& \textbf{0.0449} \\
			
			Carhart
			& \textbf{0.0004}
			& \textbf{0.0010}
			& 0.0518
			& \textbf{0.0493} \\
			
			FF5
			& 0.0524
			& 0.7814
			& 0.5951
			& 0.8029 \\
			
			FF6
			& 0.1640
			& 0.6898
			& 0.5061
			& 0.6986 \\
			
			q5
			& 0.5959
			& 0.3943
			& 0.2091
			& 0.2049 \\
			
			\addlinespace[0.6em]
			\multicolumn{5}{l}{\textit{Panel D: Investment}} \\
			CAPM
			& \textbf{0.0021}
			& \textbf{0.0015}
			& \textbf{0.0028}
			& \textbf{0.0025} \\
			
			FF3
			& \textbf{0.0399}
			& \textbf{0.0253}
			& \textbf{0.0261}
			& \textbf{0.0239} \\
			
			Carhart
			& 0.1084
			& 0.1354
			& 0.1613
			& 0.1636 \\
			
			FF5
			& \textbf{0.0186}
			& 0.7928
			& 0.1185
			& 0.1027 \\
			
			FF6
			& 0.0752
			& 0.8735
			& 0.0876
			& 0.0861 \\
			
			q5
			& 0.3214
			& 0.7107
			& 0.0976
			& 0.0909 \\
			
			\addlinespace[0.6em]
			\multicolumn{5}{l}{\textit{Panel E: Momentum}} \\
			CAPM
			& \textbf{\ensuremath{<}0.0001}
			& \textbf{\ensuremath{<}0.0001}
			& \textbf{0.0014}
			& \textbf{0.0016} \\
			
			FF3
			& \textbf{\ensuremath{<}0.0001}
			& \textbf{\ensuremath{<}0.0001}
			& \textbf{\ensuremath{<}0.0001}
			& \textbf{\ensuremath{<}0.0001} \\
			
			Carhart
			& \textbf{0.0002}
			& \textbf{0.0007}
			& 0.5374
			& 0.2371 \\
			
			FF5
			& \textbf{\ensuremath{<}0.0001}
			& \textbf{\ensuremath{<}0.0001}
			& \textbf{0.0012}
			& \textbf{0.0017} \\
			
			FF6
			& \textbf{0.0007}
			& \textbf{0.0042}
			& 0.5536
			& 0.1929 \\
			
			q5
			& 0.1276
			& 0.9911
			& 0.9028
			& 0.5315 \\
			\bottomrule
		\end{tabular}
		
		\vspace{0.4em}
		\begin{minipage}{0.95\textwidth}
			\footnotesize
			\emph{Note:} The sample contains 696 monthly observations from January 1967 through December 2024. GRS tests the joint zero-alpha restriction for ten value-weighted deciles formed along the corresponding order. H--L denotes the extreme-decile spread, with its sign oriented to match the direction of the order; its two-sided p-value is invariant to that orientation. H--L and rank-area tests use Newey--West standard errors with six monthly lags. Axis $IAE$ uses the model-specific HAC-GP null described in \Cref{subsec:finite_axis_implementation}. Bold entries reject at the 5\% level. The classical finite-sample GRS $F$ distribution relies on its standard i.i.d.\ multivariate-normal regression-error assumption, whereas H--L, rank-area, and axis-$IAE$ inference allows serial dependence through HAC methods. The table therefore compares each procedure under its conventional implementation rather than holding the inference method fixed.
		\end{minipage}
	\end{table}
	
	Across the 30 model--axis cells, GRS and axis $IAE$ agree in 20 cases: both reject in ten and neither rejects in ten. The remaining ten disagreements are exactly balanced across directions. GRS alone rejects in five cells---q5 on market capitalization, CAPM on operating profitability, FF5 on investment, and Carhart and FF6 on momentum---whereas axis $IAE$ alone rejects in five---FF3, FF5, and FF6 on market capitalization and FF5 and FF6 on value. GRS and the extreme-decile test reach the same 5\% decision in 28 of 30 cells. The two exceptions are Carhart on market capitalization and FF5 on investment, where GRS rejects the joint decile restriction but the extreme spread does not. The expanded map therefore combines broad agreement with exactly symmetric two-way disagreement, reinforcing that neither procedure is uniformly more aggressive than the other.
	
	\end{appendices}
	
	\singlespacing

\end{document}